\newcommand{\LVar}{\ensuremath{\mathsf{LVar}}}
\newcommand{\RVar}{\ensuremath{\mathsf{RecVar}}}
\newcommand{\stateFml}[1]{\lceil#1\rceil}
\newcommand{\chop}{\ensuremath*\!*}
\newcommand{\concat}{\ensuremath\cdot}
\newcommand{\muFml}[2]{\ensuremath{\mu#1.#2}}
\newcommand{\domain}{\ensuremath{\mathsf{D}}}
\newcommand{\interp}{\ensuremath{\mathsf{I}}}
\newcommand{\recVarAsgn}{\ensuremath{\rho}}
\newcommand{\upP}[1]{\upl#1\upr}
\newcommand{\eval}[2]{\ensuremath{[\![#1]\!]_{#2}}}
\newcommand{\evalB}[2]{\ensuremath{[\![#1]\!]_{#2,\beta}}}
\newcommand{\evalMu}[6]{\ensuremath{[\![#1]\!]_{#2,#3,#4,#5,#6}}}
\newcommand{\evalMuS}[1]{\ensuremath{\evalMu{#1}{\domain}{\interp}{\beta}{\sigma}{\recVarAsgn}}}
\newcommand{\evalPhi}[4]{\ensuremath{[\![#1]\!]^{#2}_{#3,#4}}}
\newcommand{\evalPhiS}[1]{\ensuremath{\evalPhi{#1}{
  }{\beta}{\rho}}}
\newcommand{\evalPhiB}[1]{\ensuremath{[\![#1]\!]_{\beta}}}
\newcommand{\evalPhiNoArg}[1]{\ensuremath{[\![#1]\!]}}
\newcommand{\evalPrg}[4]{\ensuremath{[\![#1]\!]_{#2,#3,#4}}}
\newcommand{\evalPrgCtx}[4]{\ensuremath{[\![#1]\!]_{#2,#3,#4}}^{ctx}}
\newcommand{\evalPrgS}[1]{\ensuremath{\evalPrg{#1}{\domain}{\interp}{\sigma}}}
\newcommand{\evalPrgSCtx}[1]{\ensuremath{\evalPrgCtx{#1}{\domain}{\interp}{\sigma}}}
\newcommand{\singleton}[1]{\ensuremath{\langle#1\rangle}}
\newcommand{\chopT}{\ensuremath{\underline{\chop}}}
\newcommand{\States}{\ensuremath{\Sigma}}
\newcommand{\Traces}{\ensuremath{\mathsf{Traces}}}
\newcommand{\noEvent}[1]{\stackrel{#1}{\ensuremath{\cdot\cdot}}}
\newcommand{\Todo}[2][]{%
\ifthenelse{\equal{#1}{done}}
  {
    \todo[backgroundcolor=green!30,inline]{#2}
  }
    {
      \ifthenelse{\equal{#1}{warning}}
      {
        \todo[backgroundcolor=orange!30,inline]{#2}
      }
      {
        {
          \ifthenelse{\equal{#1}{ongoing}}
          {
            \todo[backgroundcolor=blue!30,inline]{#2}
          }
          {
            \todo[backgroundcolor=red!30,inline]{#2}
          }
        }
      }
    }
}
\newcommand{\pushEv}{\ensuremath{\mathsf{pushEv}}\xspace}
\newcommand{\popEv}{\ensuremath{\mathsf{popEv}}\xspace}
\newcommand{\callEv}{\ensuremath{\mathsf{callEv}}\xspace}
\newcommand{\retEv}{\ensuremath{\mathsf{retEv}}\xspace}
\newcommand{\startEv}{\ensuremath{\mathsf{startEv}}\xspace}
\newcommand{\finishEv}{\ensuremath{\mathsf{finishEv}}\xspace}
\newcommand{\upl}{\ensuremath{\{}}
\newcommand{\upr}{\ensuremath{\}}}
\newcommand{\update}{\ensuremath{\mathcal{U}}}
\newcommand{\judge}[2]{\ensuremath{#1:#2}}
\newcommand{\finiteNoM}[1]{\ensuremath{\overset{\begin{subarray}{l}#1\end{subarray}}}{\cdot\cdot}}
\newcommand{\finiteNoEv}[1]{\ensuremath{\overset{\begin{subarray}{l}#1\end{subarray}}}{\cdot\cdot\cdot}}
\newcommand{\turnstyle}{\vdash}
\newcommand{\ruleName}[1]{{\ensuremath{\mathsf{#1}}}}
\newcommand{\seq}[2]{#1\turnstyle#2}
\newcommand{\semseq}[2]{#1\models#2}
\newcommand{\sequent}[2]{\seq{\Gamma\ifx#1\relax\else,#1\fi} 
                             {#2}}
\newcommand{\semsequent}[2]{\semseq{\Gamma\ifx#1\relax\else,#1\fi} 
                                   {#2}}
\newcommand{\adfrac}[2]{\genfrac{}{}{}{0}
  {\begin{array}{l}#1\end{array}}
  {\begin{array}{l}#2\end{array}}}
\newcommand{\seqRule}[3]{\mbox{\small\ruleName{#1}}\ 
  \adfrac{#2}{#3}}
\newcommand{\lastEv}[1]{lastEv(#1)}
\newcommand{\Simple}[1]{\mbox{\lstinline{#1}}}
\newcommand{\code}[1]{\mbox{\lstinline|#1|}}
\newcommand{\Red}[1]{{\color{red}#1}}
\newcommand{\Blue}[1]{{\color{blue}#1}}
\newcommand{\mycomment}[1]{}
\newcommand{\methods}[1]{\method(#1)}
\newcommand{\Gi}{\ensuremath \mathcal{G}}
\newcommand{\scope}{\ensuremath \mathit{sc}}
\newcommand{\last}{\ensuremath \mathrm{last}}
\newcommand{\first}{\ensuremath \mathrm{first}}
\newcommand{\cont}[1]{\ensuremath \mathrm{K}(#1)}
\newcommand{\contP}[2]{\ensuremath \mathrm{K}^{#1}(#2)}
\newcommand{\contF}[2]{\ensuremath \mathrm{K}^{#1}(#2)}
\newcommand{\zero}{\ensuremath \circ}
\renewcommand{\zero}{\mbox{\bottle}}
\newcommand{\subst}[2]{[#1\leftarrow#2]}
\newcommand{\trueSem}{\ensuremath \mathrm{t\!t}}
\newcommand{\falseSem}{\ensuremath \mathrm{f\!f}}
\newcommand{\valB}[2]{\ensuremath \mathrm{val}_{#1}(#2)}
\newcommand{\valP}[2]{\ensuremath \mathrm{val}_{\sigma}^{#1}(#2)}
\newcommand{\valBP}[3]{\ensuremath \mathrm{val}_{#1}^{#2}(#3)}
\newcommand{\valDnoargs}{\ensuremath \mathrm{val}_{\sigma}}
\newcommand{\chopSem}{\underline{\mathbin{\ast\ast}}}
\newcommand{\updatestate}[3]{#1[#2 \mapsto #3]}
\newcommand{\chopTrSem}[2]{\ensuremath #1 \chopSem #2}
\newcommand{\concatTr}[2]{\ensuremath #1\cdot #2}
\newcommand{\consTr}[2]{\ensuremath #1 \curvearrowright #2}
\newcommand{\cons}{\ensuremath \curvearrowright}
\newcommand{\UNUSED}[1]{}
\newcommand{\OUTDATED}[1]{}
\DeclareMathOperator{\method}{procedures}
\DeclareMathOperator{\lp}{lookup}
\newcommand{\lookup}[1]{\lp(#1)}
\newcommand{\ifStmt}[2]{\Simple{if} \ #1 \ \{~ #2 ~\}\xspace}
\newcommand{\whileStmt}[2]{\Simple{while} \ #1 \ \{~ #2 ~\}\xspace}
\newcommand{\assignStmt}[2]{#1=#2\xspace}
\newcommand{\skipStmt}{\Simple{skip}\xspace}
\newcommand{\returnStmt}[1]{\Simple{return}\ #1\xspace}
\newcommand{\semPair}[2]{#1,\,K(#2)}
\newcommand{\semSeq}[3]{#1 \overset{#3}{\to} #2}
\newcommand{\Main}{\Simple{main}}
\newcommand{\inline}{\mathrm{inline}}
\newcommand{\retEvP}[2]{\mathsf{retEv}_{#1}(#2)} 
\newcommand{\callEvP}[2]{\mathsf{callEv}_{#1}(#2)}
\newcommand{\pushEvP}[2]{\mathsf{pushEv}_{#1}(#2)}
\newcommand{\popEvP}[2]{\mathsf{popEv}_{#1}(#2)}
\newcommand{\expose}[1]{\mathsf{expose}(#1)} 
\newcommand{\evTrioPV}[3]{\textit{ev}_{#1}^{#3}({#2})} 
\newcommand{\ev}[1]{\textit{ev}(#1)} 
\newcommand{\stateorevent}{t}
\newcommand{\many}[1]{\overline{#1}}
\newcommand{\RuleL}[1]{\LeftLabel{$\mathsf{#1}$}}
\newcommand{\gcondrule}[4]{ 
  \begin{equation}
    \label{eq:rule#1}
    \textsc{({#2})}\,
  \begin{array}{c} 
    #3 \\[1pt] 
    \hline\\[-7pt]
    #4 
  \end{array} 
 \end{equation}
    }
\title{Towards Trace-based Deductive Verification (Tech Report)}
\author{Richard Bubel\inst{1} \and Dilian Gurov\inst{2} \and Reiner
 H\"ahnle\inst{1} \and Marco Scaletta\inst{1}}
\institute{Technical University of Darmstadt, Germany,
  \email{<firstName>.<lastName>@tu-darmstadt.de} \and KTH Royal
  Institute of Technology, Sweden, \email{dilian@kth.se}}
\begin{document}

\maketitle
\vspace*{-1cm}


\begin{abstract}
  Contracts specifying a procedure's behavior in terms of pre- and
  postconditions are essential for scalable software verification, 
  but cannot express any constraints on the events occurring during 
  execution of the procedure. This
  necessitates to annotate code with intermediate assertions,
  preventing full specification abstraction.
  We propose a logic over symbolic traces able to specify recursive
  procedures in a modular manner that refers to specified programs
  only in terms of events. We also provide a deduction system based on
  symbolic execution and induction that we prove to be sound relative
  to a trace semantics.
  Our work generalizes contract-based to trace-based deductive
  verification.
\end{abstract}




\section{Introduction}


To make deductive verification scale, modular specification and
verification is of essence~\cite{HaehnleHuisman19}. In imperative
programming languages modularity manifests itself at the granularity
of procedure calls. A procedure's behavior is specified in terms of a
\emph{contract} \cite{Liskov79,Meyer92}: a pair of first-order pre-
and postconditions. During verification each procedure call in the code
is replaced with a check that the contract's precondition holds at
this point, the call's effect is approximated by assuming the
postcondition. In consequence, verification of the called code is
replaced (or approximated) by first-order constraints. The overall
verification effort for a program is proportional to its length,
instead of the unfolded (and unbounded) call graph.  The
contract-based approach works for recursive procedures
\cite{Hoare71,Oheimb99}. It requires an induction principle
\cite{BrotherstonSimpson11,LidstromGurov21} and is realized in most
state-of-art deductive verification systems, such as
\cite{ABBHS16,Frama-C15,LeinoWuestholz14}.

The fundamental limitation of pre-/postcondition contracts, which in
the following are called \emph{state-based} contracts, is the
inability to specify events happening \emph{during} execution of a
procedure. This is not only problematic for specifying concurrent
programs, but already an issue for interactive programs or loop
specification: It is necessary to annotate code with intermediate
assertions, which worsens readability and impedes full
abstraction from the implementation during specification. Indeed,
contracts written in most specification languages in deductive
verification \cite{acsl,JML-Ref-Manual} cannot be used independently 
of the specified code.

Lack of the ability to create abstract specifications motivates the
design of a logic serving as an abstract, \emph{trace-based} contract
language that is sufficiently expressive to model recursive calls (and
loops). The central idea is to represent the control structures
embodied in the control flow graph (CFG) of a program.  From the CFG
view two requirements on a trace-based contract language can be
derived:
\begin{enumerate*}[label=(\roman*)]
\item it must be possible to specify scopes relating to procedure
  calls and other control structures: to this end we employ
  \emph{events} that are semantically connected with the specified
  code;
\item one must represent cyclic edges in the CFG in a finite manner:
  this is achieved by a least fixed-point operator.
\end{enumerate*}
Both aspects are not new, of course.  We draw from ideas first
expressed in interval temporal logic \cite{HalpernShoham91} and the
modal $\mu$-calculus~\cite{kozen-83-muc}, respectively.

However, we use the resulting logic of \emph{trace contracts}
differently from temporal/modal logic: like in Hoare logic
\cite{Hoare69,Hoare71} and dynamic logic \cite{HKT00,Pratt76}, we
admit explicit \emph{judgments} about a concrete program: if $s$ is a
program and $\Phi$ a trace contract, then $\judge{s}{\Phi}$ expresses
that any execution of $s$ results in one of the traces characterized
by $\Phi$. As mentioned above, we incorporate \emph{events} into
contracts to enable \emph{full abstraction} from the specified code
(cf.\ \cite{JeffreyRathke05} who use \emph{actions} to
design an abstract semantics).
Consequently, our logic generalizes Hoare/dynamic logic from
state-based to trace-based contracts via events (which permit scoping)
and recursive specifications. This sounds more complex than it
is. Still, to keep the presentation intuitive and manageable, it is
crucial to choose
\begin{enumerate*}[label=(\roman*)]
\item a suitable deductive framework and
\item an adequate semantics to demonstrate soundness.
\end{enumerate*}

Regarding the first point, we adapt a symbolic execution calculus for
deductive verification in dynamic logic~\cite{ABBHS16}: it reduces a
program $s$ via symbolic execution to a sequence of elementary
symbolic state updates $\mathcal{U}_s$ (plus path conditions), hence,
it reduces judgments of the form $\judge{s}{\Phi}$ to
$\judge{\mathcal{U}_s}{\Phi}$. Adding events to this formalism is
easy. 
Dynamic logic is closed under propositional and first-order operators,
which enables formulation of a fixed-point induction rule over the
syntactic structure of programs.
Concerning the second issue, an adequate semantics must be
\emph{trace-based}, i.e.\ it provides
\begin{enumerate*}[label=(\alph*)]
\item for a given state~$\sigma$ and program $s$ the trace of $s$ when
  started in $\sigma$ and
\item for a given trace formula $\Phi$ the set of all traces it
  characterizes.
\end{enumerate*}
Since the deduction rules resemble symbolic execution of a
program $s;s'$, where $s$ is the leading statement and $s'$ the
remaining code, an adequate semantics is defined \emph{locally} for
each kind of statement $s$ and continuation $s'$. A general local,
trace-based semantic framework was suggested in
\cite{LAGCArxiv22,DHJPT17}---here, we specialize it to sequential
programs with recursive procedures.

To summarize, the main contributions of this paper are:
\begin{enumerate*}[label=(\arabic*)]
\item A trace-based specification language that permits fully
  abstract, modular specification of recursive procedures;
\item the extension of a symbolic execution calculus for
  procedure-modular verification by structural induction and trace
  abstraction;
\item a soundness proof of the deduction rules based on a local trace
  semantics.
\end{enumerate*}

In Sect.~\ref{sec:local-trace-semantics} we present the local trace
semantics, then define syntax and semantics of trace contracts in
Sect.~\ref{sec:logic-trace-contracts}. The deductive verification
system is given in Sect.~\ref{sec:calculus} (for space reasons, the 
soundness proofs are in the appendix).
Related work is in Sect.~\ref{sec:related-work}, conclusion with
future work in Sect.~\ref{sec:concl-future-work}.



\section{Local Trace Semantics of Recursive Procedures}
\label{sec:local-trace-semantics}

We provide a LAGC-style~\cite{LAGCArxiv22,DHJPT17} trace semantics for
a sequential language with recursive procedures. Its grammar is shown in Fig.~\ref{fig:lang-syntax}.

\begin{figure}  
\begin{center}
$\begin{array}[t]{r@{\hspace{2pt}}r@{\hspace{2pt}}l@{\qquad\qquad\quad}r@{\hspace{2pt}}r@{\hspace{2pt}}l}
P\in\mathit{Prog}&::=&\overline{M}\,\{d \ s\} &
M\in\mathit{ProcDecl} & ::= & m(x)\{\scope\} \\
d\in\mathit{VarDecl}  & ::= &\varepsilon \mid x; \ d &
\scope\in Scope &::= & \{d \ s; \ \returnStmt{e}\}\\
s\in\mathit{Stmt} &::= & \multicolumn{4}{l}{\skipStmt \mid  x = e \mid x = m(e) \mid \ifStmt{e}{s} \mid s;s \mid \whileStmt{e}{s}}\\
\end{array}
$\smallskip

\noindent Global lookup table:
$\Gi= \{\langle \many{m(x)\ \scope} \rangle\mid  m\in \methods{P}\}\ $
\end{center}
\caption{Syntax of an imperative language with recursive procedures}
\label{fig:lang-syntax}
\end{figure}

The definition of integer and boolean
expressions is standard, they are assumed to be well-typed. Scopes of
procedure bodies may only write to local integer variables, initialized
to zero, i.e.\ procedure calls have no side effects. We stress that
this is not a fundamental limitation, but to deal with side effects
and aliasing~\cite{ABBHS16,ParkinsonSummers12} is orthogonal to the
goals of our paper and the restriction greatly simplifies the
technical issues.

\begin{example}\label{lst:running-example}\hfill\\
\begin{minipage}[t]{0.54\textwidth}
  We illustrate the concepts in this paper with the running example
  shown on the right.  The behavior of procedure $\code{m}$ is the identity
  function for input \code{k}, but the result is computed by
  \code{k} (non tail-)recursive calls.
\end{minipage}
\hfill%
\begin{minipage}[t]{0.45\textwidth}
\begin{lstlisting}[mathescape=true,aboveskip=-2em,belowskip=0em]{}
  m(k) {
    r; // initialized to $0$
    if (k != 0)  
      { r = m(k-1); r = r + 1 };
    return r
  }
\end{lstlisting}%
\end{minipage}
\end{example}



\begin{definition}[State, Update]
  Let $\mathit{Var}$ be a set of program variables and $\mathit{Val}$ a set of values,
  with typical elements $x$ and $v$, respectively.  A \emph{state}
  $\sigma\in\Sigma$ is a partial mapping
  $\sigma: \mathit{Var} \rightharpoonup \mathit{Val}$
  from variables to values. The notation $\sigma[x\mapsto v]$
  expresses the \emph{update} of state $\sigma$ at $x$ with value $v$
  and is defined as $\sigma[x\mapsto v](y)=v$ if $x=y$ and
  $\sigma[x\mapsto v](y)=\sigma(y)$ otherwise.



\end{definition}

There is a standard evaluation function $\valDnoargs$ for expressions,
for example, in a state $\sigma=[x\mapsto0, y\mapsto1]$ we have
$\valB{\sigma}{x + y}=\valB{\sigma}{x} + \valB{\sigma}{y}=0+1=1$.

\begin{definition}[Context] 
  A \emph{(call) context} is a pair $ctx=(m,id)$, where $m$ is a
  \emph{procedure name} and $id$ is a \emph{call identifier}.  Given a
  call identifier $id$ we denote with $\emph{\code{res}}_{id}$ the unique name of
  the return variable associated with the call.
\end{definition}

We define events sufficient to characterize the scope and call
structure of recursive procedures, but one could define further event
types, for example, input events or, in a concurrent setting,
suspension events.

\begin{definition}[Event Marker]\label{def:event} 
  Let $m$ be a procedure name, $e$ a parameter, $id$ a call
  identifier, and $v$ a return value. Then $\callEvP{}{m,e,id}$ and
  $\retEvP{}{v}$ are \emph{event markers} associated with a procedure
  call and a return statement, respectively.  We also introduce event
  markers associated with the start and end of a computation in
  context $ctx$, defined as $\pushEvP{}{ctx}$ and $\popEvP{}{ctx}$,
  respectively.  We denote with $\ev{\many{e}}$ a generic event marker
  over expressions $\many{e}$.
\end{definition}

\begin{definition}[Trace] 
A \emph{trace} $\tau$ is defined by the following rules
(where $\varepsilon$ denotes the empty trace):
\qquad $
  \begin{array}{l@{\;::=\;}l@{\qquad}l@{\;::=\;}l}
    \tau & \varepsilon~|~\consTr{\tau}{t} &
    \stateorevent & \sigma~|~\ev{\many{e}}
  \end{array}
  $
\end{definition}


This definition declares traces as sequences over events and states,
but we need to uniquely associate an event $\ev{\many{e}}$ with a
state $\sigma$. This is done by inserting event $\ev{\many{e}}$ into a
trace between two copies of $\sigma$.\footnote{Alternatively, one
  could use state transitions labeled with (possibly empty) events.}
We define the \emph{event trace} $\evTrioPV{\sigma}{\many{e}}{}$ as
$ \evTrioPV{\sigma}{\many{e}}{} =
\consTr{\langle\sigma\rangle}{\consTr{\ev{\valB{\sigma}{\many{e}}}}{\sigma}}
$.  Events do not change a state.


Sequential composition ``\lstinline|r;s|'' of statements is
semantically modeled as trace composition, where the trace from
executing \lstinline|r| ends in a state from which the execution trace
of \lstinline|s| proceeds.  Thus the trace of \lstinline|r| ends in
the same state as where the trace of \lstinline|s| begins.  This
motivates the \emph{semantic chop} ``$\chopSem$'' on
traces~\cite{HalpernShoham91,HarelKP80,NakataUustalu15} that we often
use, instead of the standard concatenation operator ``$\concatTr{}{}$''.


\begin{definition}[Semantic Chop on Traces] 
  Let $\tau_1,\,\tau_2$ be traces, assume $\tau_1$ is non-empty and
  finite. The \emph{semantic chop} $\tau_1 \chopSem \tau_2$ is defined
  as
  $ \chopTrSem{\tau_1}{\tau_2} = \concatTr{\tau}{\tau_2} $, where
  $\tau_1=\consTr{\tau}{\sigma}$,
  $\tau_2=\concatTr{\langle\sigma'\rangle}{\tau'}$, and
  $\sigma=\sigma'$.
  When $\sigma\neq\sigma'$ the result is undefined.
  We overload the semantic chop symbol for sets of traces: 
  $$T_1 \chopSem T_2 = \{\tau_1 \chopSem \tau_2 \mid \tau_1 \in T_1,\,
  \tau_2 \in T_1,\,\last(\tau_1)=\first(\tau_2)\}\enspace.$$
\end{definition}

\begin{example}
  Let $\tau_1=\singleton{\sigma}\cons\sigma[\code{x}\mapsto1]$ and
  $\tau_2=\singleton{\sigma[\code{x}\mapsto1]}\cons\sigma[\code{x}\mapsto1,
  \code{y}\mapsto2]$, then $\tau_1 \chopSem \tau_2$ =
  $\singleton{\sigma}\cons\sigma[\code{x}\mapsto1]\cons\sigma[\code{x}\mapsto1,
  \code{y}\mapsto2]$.
\end{example}


Our language is deterministic, so we design local evaluation
$\valP{}{s}$ of a statement $s$ in state $\sigma$ to return a single
trace: The result of $\valP{}{s}$ is of the form
$\concatTr{\tau}{\contF{}{s'}}$, where $\tau$ is an initial
(small-step) trace of $s$ and $\contF{}{s'}$ contains the remaining,
possibly empty, statement $s'$ yet to be evaluated.

\begin{definition}[Continuation Marker] 
  Let $s$ be a program statement, then $\contP{}{s}$ is a
  \emph{continuation marker}. The empty continuation is denoted with
  $\contP{}{\zero}$ and expresses that nothing remains to be
  evaluated.
\end{definition}

The local evaluation rules defining $\valP{}{s}$ are in
Fig.~\ref{fig:local}. The rules for call and return emit suitable
events, the rule for sequential composition assumes empty leading
continuations are discarded, the remaining rules are straightforward.

\begin{figure}[t]
  \begin{align*}
    \valP{}{\Simple{skip}} =& \  
                              \concatTr{\langle\sigma\rangle}{\contF{}{\zero}} \qquad\qquad
    \valP{}{\assignStmt{x}{e}}  = 
                        \concatTr{\consTr{\langle\sigma\rangle}{\updatestate{\sigma}{x}{\valP{}{e}}}}{\contF{}{\zero}}\ \\ \valP{}{\{\  s\ \}} =& \  \valP{}{s} \qquad\qquad\quad
                        \valP{}{\Simple{return} \ e}
                        = \ \concatTr{\retEvP{\sigma}{\valP{}{e}}}{\contF{}{\zero}}\\
    \valP{}{\Simple{if}~e~\{ ~s~\}}  = & \begin{cases}
      \concatTr{\langle\sigma\rangle}{\contF{}{s}}, \ \text{if} \ \valBP{\sigma}{}{e} = \trueSem\\
      \concatTr{\langle\sigma\rangle}{\contF{}{\zero}}\, \ \text{otherwise}
    \end{cases}\\
    \valP{}{\Simple{while}~e~\{~s~\}} =& \ \valP{}{\Simple{if}~e~\{
                                         ~s;~\Simple{while}~e~\{~s~\}\}}\\
    \valP{}{r;s}
    = & \ \concatTr{\tau}{\contF{}{r'; s}}, \ \text{where} \ \valP{}{r}=\concatTr{\tau}{\contF{}{r'}}\ \text{and} \ \zero; s \rightsquigarrow s \notag \\
    \valP{}{\{\ x;d\ s\ \}} =& \   
                               \concatTr{\consTr{\langle\sigma\rangle}{\sigma[x' \mapsto 0]}}{\contF{}{\{ d\ s\subst{x}{x'}\ \}}}\ \text{with} \ x' \not \in dom(\sigma) \notag \\ 
   \valP{}{\assignStmt{x}{m(e)}} =& \ \concatTr{\callEvP{\sigma}{m,\valP{}{e},id}}{\contF{}{\assignStmt{x}{\code{res}_{id}}}}\ \text{where} \ \code{res}_{id} \not\in dom(\sigma) \notag
\end{align*}
\caption{Local Program Semantics}
\label{fig:local}
\end{figure}

We define schematic traces that allow us to succinctly characterize
sets of traces (not) containing certain events via matching. The
notation $\finiteNoEv{\many{ev}}$ represents the set of all non-empty,
finite traces \emph{without} events of type $ev\in\many{ev}$.  Symbol
$\finiteNoEv{}$ is shorthand for $\finiteNoEv{\emptyset}$ and $Ev$
includes all event types in Def.~\ref{def:event}.
With $\tau_1 \finiteNoEv{\many{ev}} \tau_3$ we denote 
the set of well defined traces $\tau_1 \chopSem \tau_2 \chopSem \tau_3$
so that $\tau_2 \in \finiteNoEv{\many{ev}}$.
%
Schematic traces make it easy to retrieve the
most recent event and the current call context from a given trace:

\begin{definition}[Last Event, Current Context] 
  \label{def:lastev-currctx}
  Let $\tau$ be a non-empty trace.
  \[\lastEv{\tau}=\begin{cases}
    ev_\sigma(\many{v})& \tau\in\finiteNoEv{} \, {ev_\sigma(\many{v})} \,  \finiteNoEv{Ev}\\
    NoEvent& \text{otherwise}
  \end{cases}\]
  \vspace{-5pt}
\[currCtx(\tau) = 
\begin{cases}
  \vspace{-10pt}
  ctx& \tau\in\,\finiteNoEv{} \,\pushEvP{\sigma}{ctx} \finiteNoEv{\pushEv,\popEv}\\
  currCtx(\tau')& \tau\in\tau' \chopSem \pushEvP{\sigma}{ctx}  \finiteNoEv{} \popEvP{\sigma'}{ctx}\finiteNoEv{\pushEv\\\popEv}\\
  (\Main,nul)& otherwise
\end{cases}\]
\end{definition}


Local evaluation of a statement $s$ yields a small step $\tau$ of $s$
plus a continuation~$\contF{}{s'}$. Therefore, traces can be extended
by evaluating the continuation and stitching the result to
$\tau$. This is performed by \emph{composition rules} that operate on
a configuration of the form $\tau,\cont{s}$. The process terminates
when all statements are evaluated, i.e.\ $\cont{s}=\cont{\zero}$. In
this case $\tau$ is the semantics resulting from evaluation of a
program.
There are three composition rules. The following rule evaluates a
statement that has not been directly preceded by a call or return and
extends the current trace accordingly.

\gcondrule{progress-rule}{Progress}{
  \tau \not\in \finiteNoEv{}\,\callEvP{\sigma}{\_,\_,\_} \qquad\qquad \tau \not\in \finiteNoEv{}\,\retEvP{\sigma}{\_} \\
  \sigma = \last(\tau)  \qquad\qquad 
  \valP{}{s} = \concatTr{\tau'}{\cont{s'}}
}{
  \tau, \cont{s} \to  \chopTrSem{\tau}{\tau'}, \cont{s'}
}

Procedure calls and returns must be handled differently to model the
change of call context and the return of the computed result.  Right
after a call statement was evaluated, i.e.\ when $\tau$ ends with a
call event, the call context is switched to the new context $ctx$ and
the body $s'$ of the called procedure is inlined:
\gcondrule{call-rule}{Call}{
  ctx=(m,id) \qquad\qquad
  \tau \in \finiteNoEv{}\,\callEvP{\sigma}{m,v,id}\\
  s' = sc \subst{x}{v},\ \text{where} \lookup{m, \Gi} = m(x) \ \scope
}{
  \tau, \cont{s} \to  \chopTrSem{\tau}{\pushEvP{\sigma}{ctx}}, \cont{s';s}
}

Immediately after a return statement the returned value is assigned to
the \emph{result variable} associated with the current context and the
context is switched back to the old context $ctx$ retrieved from
$\tau$ via matching. Together,
rules~\eqref{eq:rulecall-rule}--\eqref{eq:ruleret-rule} model
synchronous semantics of procedure calls.
\gcondrule{ret-rule}{Return}{
  ctx=currCtx(\tau)=(m,id)\qquad\qquad
  \tau \in \finiteNoEv{}\,{\retEvP{\sigma}{v}}
}{
  \tau, \cont{s} \to  \consTr{\tau}{\sigma [\code{res}_{id} \mapsto v] \chopSem \popEvP{\sigma [\code{res}_{id} \mapsto v]}{ctx}}, \cont{s}
}  

\begin{example}\label{example:composition-rules}
  We evaluate configuration
  $\singleton{\sigma_0}, \cont{\assignStmt{\code{x}}{\code{m(1)}}}$ for empty
  $\sigma_0$, with $\code{m}$ as defined in
  Expl.~\ref{lst:running-example}. Let $s=mb\subst{\code{k}}{1}$, where $mb$
  is the body of $\code{m}$.  Applying the progress and call rule we obtain the initial rule sequence:
  \begin{align*} 
    &\singleton{\sigma_0}, \cont{\assignStmt{\code{x}}{\code{m(1)}}}\to\callEvP{\sigma_0}{\code{m},1,0},\cont{\assignStmt{\code{x}}{\code{res}_0}} \\
    &\to \callEvP{\sigma_0}{\code{m},1,0} \chopSem \pushEvP{\sigma_0}{(\code{m},0)},\cont{\code{s};\assignStmt{\code{x}}{\code{res}_0}}
  \end{align*}
\end{example}

It is straightforward to see that the local evaluation and the
composition rules are exhaustive and deterministic, which justifies
the following definition:

\begin{definition}[Program Trace] 
  Given a program $s$ and a state $\sigma$, the \emph{trace} of
  $s$ (with implicit lookup table) is the maximal sequence obtained by
  repeated application of composition rules, starting from
  $\singleton{\sigma}, \cont{s}$. If finite, it has the form
  $\singleton{\sigma}, \cont{s} \to \cdots \to \tau, \cont{\zero}$,
  also written
  $\singleton{\sigma}, \cont{s} \overset{*}{\to} \tau, \cont{\zero}$.
\end{definition}


\begin{definition}[Program Semantics] 
  \label{def:glob-sem}
  The \emph{semantics} of a program $s$ is
  only defined for terminating programs as $\eval{s}{\tau}=\tau'$ if
  $\tau, \cont{s} \overset{*}{\to} \tau \chopSem \tau', \cont{\zero}$.
  
\end{definition}
From this definition and the local semantics of sequential statements
it is easy to prove Proposition~\ref{prop:glob-sem-comp} in the
appendix by straightforward induction.

\begin{example}[Continuing Example~\ref{example:composition-rules}]
  Fig.~\ref{fig:big-step-semantics-example} visualizes the semantics
  $\eval{\assignStmt{\code{x}}{\code{m}(1)}}{\singleton{\sigma_0}}$
  with the intermediate states
  $\sigma_1= \sigma_0[\code{r}'\mapsto 0]$,
  $\sigma_2= \sigma_1[\code{r}''\mapsto 0]$,
  $\sigma_3= \sigma_2[\code{res}_1\mapsto 0]$,
  $\sigma_4= \sigma_3[\code{r}'\mapsto 0] $,
  $\sigma_5= \sigma_4[\code{r}'\mapsto 1] $, and
  $\sigma_6=
  \sigma_5[\code{res}_0\mapsto1]$.
\begin{figure}
  \newcommand{\sOne}{\sigma_0[\code{r}'\mapsto 0]}
  \newcommand{\sTwo}{\sigma_1[\code{r}''\mapsto 0]}
  \newcommand{\sThree}{\sigma_2[\code{res}_1\mapsto 0]}
  \newcommand{\sFour}{\sigma_3[\code{r}'\mapsto 0]}
  \newcommand{\sFive}{\sigma_4[\code{r}'\mapsto 1]}
  \newcommand{\sSix}{\sigma_5[\code{res}_0\mapsto1]}
  \newcommand{\sSeven}{\sigma_6[\code{x}\mapsto 1]}

  \newcommand{\sOneShort}{\sigma_1}
  \newcommand{\sTwoShort}{\sigma_2}
  \newcommand{\sThreeShort}{\sigma_3}
  \newcommand{\sFourShort}{\sigma_4}
  \newcommand{\sFiveShort}{\sigma_5}
  \newcommand{\sSixShort}{\sigma_6}
  \newcommand{\sSevenShort}{\sigma_6}
  \newcommand{\ctxMZero}{\Red{(\code{m},\,0)}}
  \newcommand{\ctxMOne}{\Blue{(\code{m},\,1)}}
\centering  
\begin{tikzpicture}[remember picture]
  \node[align=left,text width=7cm] (beginM0) at (0,0){
   \vspace{4.6cm}\\
   \tikzmark{popCtxMZero}$\cons\popEvP{\sSix}{\ctxMZero}$\\
   \tikzmark{assignX}$\cons\sSeven$
  };
  \node[align=left, above of=beginM0,text width= 6.5cm, shift={(-0.2cm,-0.5 cm)}] (contentM0){
    \vspace{3.2cm}\\
    \tikzmark{popCtxMOne}$\cons\popEvP{\sThree}{\ctxMOne}$\\
    \tikzmark{assignRPrime}$\cons\sFour$\\
    \tikzmark{incrRPrime}$\cons\sFive$\\
    \tikzmark{returnRPrime}$\chopSem\retEvP{\sFiveShort}{1}$
  };
  \node[align=left, below of=contentM0, text width=5.5cm, shift={(-0.5cm,1.5 cm)}] (contentM1){
    \tikzmark{callMWithOne}$\callEvP{\sigma_0}{\code{m},1,0} \chopSem $\\
    \tikzmark{pushCtxMZero}$\pushEvP{\sigma_0}{\ctxMZero}$\\
    \tikzmark{initRPRime}$\cons\sOne$\\
    \tikzmark{callMWithZero}$\chopSem \callEvP{\sigma_1}{\code{m},0,1} $\\
    \tikzmark{pushCtxMOne}$\chopSem\pushEvP{\sigma_1}{\ctxMOne}$\\
  \tikzmark{initRSecond}$\cons \sTwo$\\
  \tikzmark{returnRSecond}$\chopSem\retEvP{\sTwoShort}{0}$
  };

  \node (callMWithOneNode) at (pic cs:callMWithOne) {};
  \node [above of=callMWithOneNode,shift={(1cm,-0.4cm)},align=left] {$\eval{\assignStmt{\code{x}}{\code{m}(1)}}{\singleton{\sigma_0}}$=};
  \node [left of=callMWithOneNode, shift={(-1cm,0.08cm)},align=left] {\emph{call} \code{m}(1)};
  \node (pushCtxMZeroNode) at (pic cs:pushCtxMZero) {};
  \node [left of=pushCtxMZeroNode, shift={(-1cm,0.08cm)},align=left] {\emph{switch to context} \ctxMZero};
  \node (popCtxMZeroNode) at (pic cs:popCtxMZero) {};
  \node [left of=popCtxMZeroNode, shift={(-1cm,0.08cm)},align=left] {\emph{switch from context} \ctxMZero};

  \node (callMWithZeroNode) at (pic cs:callMWithZero) {};
  \node [left of=callMWithZeroNode, shift={(-1cm,0.08cm)},align=left] {\emph{call} \code{m}(0)};
  \node (pushCtxMOneNode) at (pic cs:pushCtxMOne) {};
  \node [left of=pushCtxMOneNode, shift={(-1cm,0.08cm)},align=left] {\emph{switch to context} \ctxMOne};
  \node (popCtxMOneNode) at (pic cs:popCtxMOne) {};
  \node [left of=popCtxMOneNode, shift={(-1cm,0.08cm)},align=left] {\emph{switch from context} \ctxMOne};

  \node (assignRPrimeNode) at (pic cs:assignRPrime) {};
  \node [left of=assignRPrimeNode, shift={(-1cm,0.08cm)},align=left] {$\code{r}'=\code{res}_1$};
  \node (incrRPrimeNode) at (pic cs:incrRPrime) {};
  \node [left of=incrRPrimeNode, shift={(-1cm,0.08cm)},align=left] {$\code{r}'=\code{r}'+1$};
  \node (returnRPrimeNode) at (pic cs:returnRPrime) {};
  \node [left of=returnRPrimeNode, shift={(-1cm,0.08cm)}] {\Simple{return} $\code{r}'$};

  \node (assignXNode) at (pic cs:assignX) {};
  \node [left of=assignXNode, shift={(-1cm,0.08cm)},align=left] {$\code{x}=\code{res}_0$};

  \node (initRSecondNode) at (pic cs:initRSecond) {};
  \node [left of=initRSecondNode, shift={(-1cm,0.08cm)}] {\emph{initializing fresh} $\code{r}''$};
  \node (returnRSecondNode) at (pic cs:returnRSecond) {};
  \node [left of=returnRSecondNode, shift={(-1cm,0.08cm)}] {\Simple{return} $\code{r}''$};
  \node (initRPRimeNode) at (pic cs:initRPRime) {};
  \node [left of=initRPRimeNode, shift={(-1cm,0.08cm)}] {\emph{initializing fresh} $\code{r}'$};

  \node[draw=red,thick, minimum width = 4cm, minimum height= 3.90cm,shift={(1.9cm,-1.7cm)}] at (pushCtxMZeroNode) {};
  \node[draw=blue,thick,minimum width = 3cm, minimum height= 1.14cm,shift={(1.4cm,-0.33cm)}] at (pushCtxMOneNode) {};
  \node[draw=black,thick,minimum width = 5cm, minimum height= 5.2cm,shift={(2.4cm,-2.3cm)}] at (callMWithOneNode) {};

  \node[align=center,draw=blue, thick,fill=blue!20,below of=contentM1,shift={(1 cm,0.15 cm)}] (labelCtx1){
    {\scriptsize currCtx=\ctxMOne}
  };

  \node[align=center,thick,draw=red,fill=red!20,below of=contentM0,shift={(1.45 cm,-0.6 cm)}] (labelCtx0){
    {\scriptsize currCtx=\ctxMZero}
  };

  \node[align=center,thick,draw=black,fill=black!20,below of=contentM0,shift={(2.8 cm,-1.8 cm)}] (labelCtx0){
    {\scriptsize currCtx=(Main,nul)}
  };
\end{tikzpicture}
\caption{Visualization of the semantics of $\assignStmt{\code{x}}{\code{m}(1)}$ in $\sigma$\textsubscript{$0$}}
\label{fig:big-step-semantics-example}
\end{figure}
\end{example}
We only consider traces that are \emph{adequate}, i.e. consistent with local evaluation and
composition rules:
context switches can only occur in event pairs \callEv-\pushEv and
\retEv-\popEv consistently with the current context, and the freshness
of call identifiers has to be  preserved. We define $\Traces$ as the set of all adequate traces.
  For details, see
Appendix~\ref{app:soundness-calculus}.




\section{A Logic for Trace Contracts}
\label{sec:logic-trace-contracts}

We present a logic for specifying properties over finite program
traces.
The logic is a temporal $\mu$-calculus \cite{sti-93-handbook-chapter}
with two binary temporal operators, corresponding to concatenation and
chop over sets of traces, respectively.
We consider the syntax fragment without negation, which guarantees
that fixed-point formulas indeed denote fixed points of the
corresponding sematic transformers, and with least fixed-point
recursion only. Then we show this logic to be suitable for expressing
relevant finite-trace properties of recursive programs.



\subsection{Syntax}

The formulas of our logic are built from a set~$\LVar$ of first-order 
(``logical'') variables and a set~$\RVar$ of recursion variables.
The syntax of the logic is defined by the following grammar:
$$ \Phi \>::=\> \stateFml{P} \mid X (\overline{t}) \mid \mathit{Ev} \mid
                \Phi\wedge\Phi \mid \Phi\vee\Phi \mid 
                \Phi\concat\Phi \mid \Phi\chop\,\Phi \mid 
                (\muFml{X (\overline{y})}{\Phi}) (\overline{t})
$$
where~$P$ ranges over first-order predicates, $X$ over recursion 
variables, $\overline{y}$ over tuples of first-order variables, 
$\overline{t}$ over tuples of terms over first-order variables, 
and where in the last clause the arities of $\overline{y}$ and 
$\overline{t}$ agree.

Events $\mathit{Ev}$ have the form $\startEv(m,e,i)$ or
$\finishEv(m,e,i)$.  There are no push or pop events, but the finish
events record the context. As stated in the introduction, we aim at
procedure-modular specification and verification. Therefore, it is not
necessary to record all context switches in a global trace, as long as
corresponding calls and returns can be uniquely identified.

\begin{remark}
  Events $Ev$ in the logic are \emph{syntactic} representations of
  events in the local semantics
  (Sect.~\ref{sec:local-trace-semantics}), thus different
  entities. This permits to tailor events to the desired degree of
  abstraction of specifications and the deduction system. For example,
  if one is only interested in the interface behavior of a program,
  but not in internal state changes, one might choose abstract events
  of the form $\callEv(m)$/$\retEv(m)$, where call identifiers and
  arguments are dropped.
\end{remark}

\begin{example}
  To illustrate our logic, we introduce a formula template (or
  pattern) that we make extensive use of later.
  %
  Let $m$ be a procedure name, and let $\mathit{NoEv} (m)$ be a
  predicate that is true of a state if the latter is not an event that
  involves~$m$.
Consider the recursive formula:
$$ \Psi_m \>=\> 
   \muFml{X}{(\mathit{NoEv} (m) \vee \mathit{NoEv} (m) \concat X)} $$
   It is true for any trace that is either a singleton trace which is
   not an event involving~$m$ (base case), or else for a trace that
   starts with a singleton trace that is not an event involving~$m$
   and continues as a trace satisfying~$\Psi_m$ (induction case).
   Equivalently, $\Psi_m$~holds for finite traces not containing any
   event involving~$m$.
%
\end{example}

With help of~$\Psi_m$, we define the binary logic operator
$\Phi_1 \noEvent{m} \Phi_2$ as shorthand for
$\Phi_1 \chop\, \Psi_m \chop\, \Phi_2$, expressing the trace property of
satisfying~$\Phi_1$ initially, satisfying~$\Phi_2$ in the end, and not
containing any event involving~$m$ in between.
This is the logic equivalent of the semantic operator 
$\finiteNoEv{\many{ev}}$ on traces introduced earlier.

\subsection{Semantics}

To define the semantics of our logic, we need
a first-order variable assignment $\beta:\LVar\rightarrow\domain$ and
a recursion variable assignment
$\rho:\RVar\rightarrow(\overline{\domain}\rightarrow 2^{\Traces})$
that assigns each recursion variable to a set of traces.
The (finite-trace) semantics~$\evalPhiS{\Phi}$ of formulas 
as a set of traces is inductively defined in Fig.~\ref{fig:semantics-formulas}.
\begin{figure}[t]
\begin{align*}
\evalPhiS{\stateFml{P}} &= \{\singleton{\sigma} \>|\> \sigma\in\States \wedge \sigma \models P \} \qquad\qquad
\evalPhiS{X (\overline{t})} = \rho (X) (\beta (\overline{t})) \\
\evalPhiS{\startEv(m,e,i)} &= \{\callEvP{\sigma}{m,e,i} \chopSem \pushEvP{\sigma}{(m,i)} \>|\> \sigma\in\States\} \\
\evalPhiS{\finishEv(m,e,i)} &= \{\retEvP{\sigma}{\valP{}{e}} \concat \popEvP{\sigma [\code{res}_{i} \mapsto \valP{}{e}]}{(m,i)} \>|\> \sigma\in\States\} \\
\evalPhiS{\Phi_1\wedge\Phi_2} &= \evalPhiS{\Phi_1}\cap\evalPhiS{\Phi_2} \qquad\quad
\evalPhiS{\Phi_1\vee\Phi_2} = \evalPhiS{\Phi_1}\cup\evalPhiS{\Phi_2} \\
\evalPhiS{\Phi_1\concat\Phi_2} &= \{\tau_1\concat\tau_2 \>|\> \tau_1\in\evalPhiS{\Phi_1}  \wedge \tau_2\in\evalPhiS{\Phi_2}\} \\
\evalPhiS{\Phi_1\chop\,\Phi_2} &= \{\tau_1\,\chopT\,\tau_2 \>|\> \tau_1\in\evalPhiS{\Phi_1} \wedge \tau_2\in\evalPhiS{\Phi_2}\} \\
\evalPhiS{(\muFml{X (\overline{y})}{\Phi}) (\overline{t})} &= 
\evalPhiS{\muFml{X (\overline{y})}{\Phi}} (\beta (\overline{t})) \\
\evalPhiS{\muFml{X (\overline{y})}{\Phi}} &= \sqcap\, \{ F : \overline{\domain}\rightarrow 2^{\Traces} \>|\> \lambda \overline{d}. \evalPhi{\Phi}{}{\beta [\overline{y} \mapsto \overline{d}]}{\rho [X \mapsto F]} \sqsubseteq F \}\\
\end{align*}
\begin{center}\vspace*{-2.25em}
$\sqsubseteq$ and $\sqcap$ denote point-wise set inclusion and intersection, respectively
\end{center}
\caption{Semantics of formulas}
\label{fig:semantics-formulas}
\end{figure}
%
%

By a \emph{trace formula} we mean a formula of our logic that is
closed with respect to both first-order and recursion variables.
Since the semantics of a trace formula does not depend on any variable
assignments, we sometimes use $\evalPhiNoArg{\Phi}$ to denote
$\evalPhiS{\Phi}$ for arbitrary $\beta$ and~$\rho$.
We also permit a slight extension, where we allow trace formulas to be
syntactically closed with respect to first-order operators, as long as
fixed-point formulas occur only in positive positions.

\mycomment{
\begin{definition}[Semantics of Trace Formulas]
  \label{def:semantics-trace-fml}
  Given a first-order variable assignment $\beta$,
  the semantics of a trace formula~$\Phi$ is the set of all traces
  $\evalPhiS{\Phi}$ for any first-order variable assignment~$\beta$
  and recursion variable assignment $\rho$, formally:
  $\evalPhiNoArg{\Phi} = \bigcup_{\beta,\rho}\evalPhiS{\Phi}$.
\end{definition}
} 


\subsection{Specifying Procedure Contracts}

In general, a (finite-trace) procedure contract should capture the
sequences of states and events that are allowed to occur as a result
of a call to that procedure. Since the procedure may recursively
call other procedures, such contracts need to be stated recursively. 
The base case(s) of a recursive contract should specify the traces
that do not involve any procedure calls. 
The induction case(s), on the other hand, should specify the remaining
traces, and use recursion variables, properly applied to arguments,
at the points where calls to procedures are made. 


While it can be cumbersome to propose a general pattern for procedure
contracts, we illustrate the idea on a particular class of contracts
that generalize traditional, Hoare-style state-based contracts to
trace-based ones. As usual, we use state predicates over logical
variables to formulate pre- and postconditions that express the
intended relationship between the values of the program variables upon
procedure call and return, but in addition, we capture the structure
and arguments of recursive procedure calls.
We propose the following pattern, where~$m$ is the name of the
specified procedure, $n$~the value of its (sole) formal parameter with
which it is called, and $i$ the call identifier:
%
\begin{align*}
  \mathsf{H}_m = \mu &X_m(n,i).\\
  & \bigl( 
\stateFml{\mathit{pre}_m^{\mathit{base}}}\chop\,\startEv(m,n,i)\finiteNoM{m}\finishEv(m,f_m(n),i)\chop\stateFml{\code{res}_i\doteq f_m(n)}\, \lor\\
                   &  \phantom{\bigl(}\stateFml{\mathit{pre}_m^{\mathit{step}}}\chop\,\startEv(m,n,i)\finiteNoM{m}X_m(\mathit{step}^{-1}(n),\#(i))\\
  & \hspace{42.5mm}\finiteNoM{m}\finishEv(m,f_m(n),i)\chop\stateFml{\code{res}_i\doteq f_m(n)}
\bigr)
\end{align*}

Both base and inductive case use state predicates
$\mathit{pre}_m^{\mathit{base}/\mathit{step}}$ to establish the
precondition. Both cases use the state predicate
$\code{res}_i\doteq f_m(n)$ to specify the return
value~$\code{res}_i$ of $m$ as a function~$f_m$ depending on the
value~$n$ of the formal parameter.
The inductive case also specifies a recursive call to~$m$. The first
argument is the inverse of the recursive step function, for example,
$\mathit{step}^{-1}=n-1$ when $\mathit{step}=n+1$. The symbol \#
ensures that the call context identifier is fresh.

\begin{example} 
  \label{ex:H-m}
  We illustrate the use of pattern~$\mathsf{H}_{\text{\scriptsize{\code{m}}}}$ to provide a
  contract for the procedure~$\code{m}$ from
  Example~\ref{lst:running-example}.
To specify that the returned value equals the value of the parameter,
we should define $\mathit{pre}_{\text{\scriptsize{\code{m}}}}^{\mathit{base}} = \code{n}\doteq 0$,
$f_{\text{\scriptsize{\code{m}}}}$ as the identity function,
$\mathit{pre}_{\text{\scriptsize{\code{m}}}}^{\mathit{step}} = \code{n}>0$, and 
$\mathit{step} (\code{n}) = \code{n} + 1$:
%
%
\begin{align*}
  \mu &X_{\text{\scriptsize{\code{m}}}}(\code{n},i). \bigl(\stateFml{\code{n} \doteq 0}\chop\startEv(\code{m},0,i)\finiteNoM{\text{\scriptsize{\code{m}}}}\finishEv(\code{m},0,i)\chop\stateFml{\code{res}_i\doteq 0}\,\lor\\
               & \stateFml{\code{n}>0}\chop\startEv(\code{m},\code{n},i)\finiteNoM{\text{\scriptsize{\code{m}}}}X_{\text{\scriptsize{\code{m}}}}(\code{n}-1,\#(i))\finiteNoM{\text{\scriptsize{\code{m}}}}\finishEv(\code{m},\code{n},i)\chop\stateFml{\code{res}_i\doteq \code{n}}\bigr)
\end{align*}
\end{example}

The traditional big-step semantics of state-based contracts can be
expressed simply as
$\mathsf{H}_m\subseteq\stateFml{\mathit{pre}_m^{\mathit{base}}\lor\mathit{pre}_m^{\mathit{step}}}\finiteNoM{}\stateFml{\code{res}_i\doteq
  f_m(n)}$.

In addition to the final state, we can also specify behavior related
to intermediate states.  The contract in Example~\ref{ex:H-m} can be
extended to relate the result of the internal recursive call to the
final state by replacing the trace after the recursive call
$X_{\text{\scriptsize{\code{m}}}}(\code{n}-1,\#(i))$ with
\[
  \chop\stateFml{\code{res}_{\#(i)}=\code{n}-1}\finiteNoM{\text{\scriptsize{\code{m}}}}\finishEv(\code{m},\code{n},i)\chop\stateFml{\code{res}_i\doteq
  \code{res}_{\#(i)}+1}
  \]

Trace formulas (by design) completely abstract away from a specified
program, to which they are only connected via events. The set
$\mathit{Ev}$ of events can be extended if needed. It is also
conceivable to expose parts of the program state in traces: Let
predicate $\expose{l,e,i}$ be true in all states $\sigma$, such that
local variable $l$ in the call identified by $i$ has value
$\beta(e)$. Then one can, for example, insert
``$\chop\stateFml{\expose{\code{r},\code{n}-1,i}}$'' right after the recursive call
to $X_{\text{\scriptsize{\code{m}}}}$ above to express that $\code{r}$ has the value $\code{n}-1$ immediately
after the call returns.



\section{A Calculus for Deductive Verification}
\label{sec:calculus}

To verify whether a program fulfills its trace contract, we design a
symbolic execution calculus that reduces programs to a sequence of
syntactic state updates containing recursive calls. These are matched
against a trace contract formula with a novel abstraction rule.

\subsection{Updates}

Updates \cite{ABBHS16} can be seen as explicit substitutions recording
state changes. An \emph{elementary update} assigns an expression $e$
to variable $v$, denoted by $\upP{v:=e}$.  We also admit \emph{event
  updates} $\upP{\mathsf{Ev}(\many{e})}$ to record that event
$\mathsf{Ev}$ with parameters $\many{e}$ occurred (here, $\mathsf{Ev}$
is $\startEv$ or $\finishEv$, but this is generalizable).  A composite
update $\update$ is a (possibly empty) sequence of elementary/event
updates, see the grammar on top of Fig.~\ref{fig:stmt-update-eval}.
%
Updates~$\update$ precede statements $s$ with the meaning that in
$\update s$ statement $s$ is evaluated under the state changes
embodied~by~$\update$.
\begin{example}
  \label{ex:statement-leading-updates}
  An expression like
  ``$\,\upP{\startEv(\code{m}',0,i)}\upP{\code{r}:=0}~\returnStmt{\code{r}}\,$''
  results after partial symbolic execution of a procedure $\code{m}'$,
  where only the return statement is left to be executed in a state
  where \code{r} is assigned value $0$.
\end{example}

\paragraph{Local Evaluation.}

We extend the local semantic evaluation rules to programs with leading
updates. Fig.~\ref{fig:stmt-update-eval} contains semantic rules for
expressions of the form $u\update s$, where $u$ is either an
elementary update or an event update.  If $\update$ is empty then
after the evaluation of $u$ the rules from
Sect.~\ref{sec:local-trace-semantics} apply. The rules are similar to
those for statements. The first rule is similar to the rule for
sequential composition.
Three rules evaluate elementary updates:
\begin{enumerate*}[label=(\alph*)]
\item the case corresponding to assignments;
\item if a result variable $\code{res}_i$ is assigned, then the update
  is simply ignored: it is redundant, because its evaluation always
  follows $\finishEv(m,e,i)$;
\item the semantics of procedure calls invokes the program semantics.
\end{enumerate*}
The last two rules evaluate event updates occurring in the deduction
rules.
They represent the start and end of the execution of $m$,
respectively, and generate suitable events.
%
%
\begin{figure}[t]
  \begin{align*}
    & \qquad\qquad\qquad\qquad \update  ::=\epsilon~|~\upl v:= e\upr\update~|~\upP{\mathsf{Ev}(\overline{e})}\update \quad (\epsilon~\text{empty sequence of updates})\\[0.6em]
  & \valP{}{u\update s}
     =\concatTr{\tau}{\contF{}{\update s}}, \ \text{if} \ \valP{}{u}=\concatTr{\tau}{\contF{}{\zero}}\\
  & \valP{}{\upP{v:=e}} = \valP{}{v=e}, \text{ with $v\neq \code{res}_i$}\qquad\qquad
   \valP{}{\upP{\code{res}_i:=e}} = \singleton{\sigma} \concat \cont{\zero}\\
  & \valP{}{\upP{v:=m(e)}} = \eval{v=m(e)}{\singleton{\sigma}}\concat \cont{\zero},~\text{when}~ \eval{v=m(e)}{\singleton{\sigma}}~\text{is defined}\\
  & \valP{}{\upP{\startEv(m,e,i)}} = \callEvP{\sigma}{m,e,i} \chopSem \pushEvP{\sigma}{(m,i)} \concat \cont{\zero}\\ 
  & \valP{}{\upP{\finishEv(m,e,i)}} =  \retEvP{\sigma}{\valP{}{e}} \concat \popEvP{\sigma [\code{res}_{i} \mapsto \valP{}{e}]}{(m,i)} \concat \cont{\zero}
\end{align*}
\caption{Update syntax and local evaluation of statements with leading updates}
\label{fig:stmt-update-eval}
\end{figure}

\begin{example}
  One step of local evaluation of Example~\ref{ex:statement-leading-updates} in a
  state $\sigma$ yields the expression
  ``$\callEvP{\sigma}{\code{m}',0,i} \chopSem
  \pushEvP{\sigma}{(\code{m}',i)} \concat
  \cont{\upP{\code{r}:=0}\returnStmt{\code{r}}}$''.
\end{example}

\paragraph{Updates over Expressions.}

Expressions $e$ are evaluated in the current program state,
represented by a preceding composite update~$\update$. To evaluate~$e$
under~$\update$, written $\update (e)$, we apply updates from inner-
to outermost. Applying an elementary update $\upP{v:=e'}$ to an
expression~$e$ corresponds to syntactic substitution. Events have no
effect on the value of expressions. We obtain the following rules:
\begin{align*}
  \update'u(e) = \update'(u(e)) \qquad  
  \upP{v:=e'}(e) = e[v/e']) \qquad 
  \upP{Ev(\_)}(e) = \epsilon(e) =  e
\end{align*}

\noindent The trace composition is performed by applying rule
(\ref{eq:ruleprogress-rule})
($\textsc{\small{P}\scriptsize{ROGRESS}}$), overloaded to support the
occurrence of updates. Therefore, the semantics is defined in a
similar manner as in Def.~\ref{def:glob-sem}:
\begin{definition}[Semantics of Programs with Updates] 
  We define the \emph{semantics} of a terminating program (undefined
  else) with leading updates as
  $$\eval{\update \,s}{\tau}=\tau', \ \text{if} \ \tau, \cont{\update \,s} \overset{*}{\to} \tau \chopSem \tau', \cont{\zero}, \text{ with } s \text{ possibly empty.}$$
\end{definition}

When symbolically executing the return statement we need to generate a
$\finishEv$ whose context matches the current context of the leading
update.  We retrieve the call context from the leading update with a
helper function:
\begin{definition}[Current Context for Updates]
  \[currCtx(\update) = 
  \begin{cases}
    (m,i)& \update=\update'\upP{\startEv(m,\_,i)}\\
    currCtx(\update')& \update=\update' \upl v:=e \upr\\
    currCtx(\update')& \update=\update' \upP{\startEv(m,\_,i)} \update''  \upP{\finishEv(m,\_,i)}\\
    (\Main,nul)& otherwise
  \end{cases}\]
\end{definition}

\begin{example}
  The context in which the return statement of
  Example~\ref{ex:statement-leading-updates} executes is:
  $currCtx$(\upP{$\startEv(\code{m}',0,i)$}\upP{\code{r}:=0}) =
  $currCtx$(\upP{$\startEv(\code{m}',0,i)$}) = $(\code{m}',i)$.
\end{example}

\subsection{Calculus for Straight-line Programs}

First we present a calculus for programs that do neither contain loops
nor procedure calls.  Our deductive proof system is a Gentzen-style
sequent calculus, where partial symbolic execution of a program is
represented by $\update s$, with $\update$ the executed part and $s$
the remaining part.  If we \emph{judge} $\update s$ to conform to a
trace specification $\Phi$ we write $\judge{\update\,s}{\Phi}$, where
the \emph{judgment} $\judge{\update\,s}{\Phi}$ is evaluated in a
current state $\sigma$, formally:
\begin{definition}[Judgment, Assertion, Sequent]\label{def:judgment}
  A \emph{judgment} has the shape $\judge{\update\,s}{\Phi}$, where
  $\update$ is an update, $s$ a program statement, and $\Phi$ a trace
  formula.
  An \emph{assertion} is either a closed first-order predicate~$P$ or a
  judgment.
  A \emph{sequent} has the shape
  $\sequent{}{\judge{\update\, s}{\Phi}}$, where $\Gamma$ is a set of assertions.
\end{definition}

\begin{definition}[Semantics of Sequents]
  \label{def:sequent-semantics}
  Let $\sigma$ be a state.  A first-order predicate~$P$ is true
  in~$\sigma$ if $\sigma \models P$, as usual in first-order logic.  A
  judgment $\judge{\update\, s}{\Phi}$ is true in $\sigma$, denoted
  $\sigma \models \judge{\update\, s}{\Phi}$, when 
  $\eval{\update\, s}{\singleton{\sigma}}$ is undefined or $\eval{\update\, s}{\singleton{\sigma}} \in \evalPhiNoArg{\Phi}$.  A
  sequent $\sequent{}{\judge{\update\, s}{\Phi}}$ is true in~$\sigma$
  if one of the assertions in $\Gamma$ is not true in $\sigma$ or
  $\sigma \models \judge{\update\, s}{\Phi}$.  A sequent is
  \emph{valid} if it is true in all states~$\sigma$.
\end{definition}

\mycomment{
\begin{definition}[Semantics of Sequents]
\label{def:sequent-semantics}
  Let $\sigma$ be a state and $\beta$ a first-order variable assignment. 
  A first-order predicate $P$ is true in $\sigma$ and $\beta$ if
  $(\sigma,\beta) \models P$ as usual in first-order
  logic. An assertion $\judge{\update\, s}{\Phi}$ is true in $\sigma$ and $\beta$,
  denoted $(\sigma,\beta) \models \judge{\update\, s}{\Phi}$, whenever
  $\evalB{\update\, s}{\sigma} \in \evalPhiB{\Phi}$. A sequent
  $\sequent{}{\judge{\update\, s}{\Phi}}$ is true in $\sigma$ and $\beta$ if one
  of the assertion in $\Gamma$ is not true in $\sigma$ and $\beta$ or 
  $(\sigma,\beta) \models \judge{\update\, s}{\Phi}$. A sequent is \emph{valid}
  if it is true in all states~$\sigma$ and first-order variable assignments~$\beta$.
\end{definition}
} 



\mycomment{
\begin{description}
\item[Assertions] $\judge{s}{\Phi}$ program $s$ is cmpatible with
$\Phi$ in the sense of trace inclusion.
\item[Judgments] Let $currCtx(\update)=ctx$, then
$\domain,\interp,\sigma\models\judge{\update s}{\Phi}
\Leftrightarrow \evalPrgSCtx{\update s}\in\evalPrgS{\Phi}$.  where
$\evalPrgS{\Phi}=\evalMuS{\Phi}$ for arbitrary $\beta,\rho$
\item[Sequent] $\sequent{}{\Delta}$, $\Gamma,\Delta\in\mathit{set}$
simply $\sequent{}{\judge{s}{\Phi}}$ where
$\mathit{set}:=\emptyset | J | f$ with $J$ judgment and $f$
FOL-formulae
\end{description}
}

\paragraph{Rules.}

The rules of our calculus for the symbolic execution of straight-line
programs are shown in Fig.~\ref{fig:straight-line-programs}.
\begin{figure}
  \centering
$
\begin{array}{@{}c@{\qquad}c@{}}
\seqRule{Assign}{
\sequent{}{\update \upl v:= e\upr\judge{s}{\Phi}}
}{
\sequent{}{\update \, \judge{v=e;s}{\Phi}}
}
& 
\seqRule{Prestate}{
\sequent{P}{Q}\qquad
\sequent{P}{\update \judge{s}{\Phi}}
}{
\sequent{P}{\update \judge{s}{\stateFml{Q}\,\chop\,\Phi}}
}
\\[0.35cm]

\seqRule{Scope}{
\sequent{}{\update \judge{s}{\Phi}}
}{
\sequent{}{\update  \judge{\{s\}}{\Phi}}
}
&
\seqRule{Cond}{
\sequent{\update (e)}{\judge{\update s; s'}{\Phi}} \qquad
\sequent{\update (!e)}{\judge{\update s'}{\Phi}}
}{
\sequent{}{\update \ \judge{\ifStmt{e}{s}; s'}{\Phi}}
}
\\[0.4cm]
\multicolumn{2}{c}{
\seqRule{Return}{
\mathrm{currCtx}(\mathcal{U})=(m,i)\qquad
\sequent{}{\mathcal{U} \upl\finishEv(m,e,i)\upr\judge{\code{res}_i=e}{\Phi}}
}{
\sequent{}{\mathcal{U}\,\judge{\mbox{\lstinline{return}}~e}{\Phi}}
}
}\\[0.45cm]
\multicolumn{2}{c}{
\seqRule{VarDecl}{
\sequent{}{\update \upl v':=0\upr\judge{\{s[v'/v]\}}{\Phi}}
\qquad
v'\text{ fresh for }s,\Gamma,\Phi
}{
\sequent{}{\update \judge{\{v;s\}}{\Phi}}
}
}
\end{array}
$
\caption{Sequent rules for straight-line programs}
\label{fig:straight-line-programs}
\end{figure}
%
%
%
\noindent
For space reasons, we omit the standard rules for Gentzen-style
first-order sequent calculi.
In the following we present a number of specific rules, starting with
the rule for unfolding fixed-point formulas:
$$
\seqRule{Unfold}{
\sequent{}{\update \judge{s}{\Phi \!\left[(\muFml{X (\overline{y})}{\Phi}) / X,\overline{t} / \overline{y}\right]}}
}{
\sequent{}{\update \judge{s}{(\muFml{X (\overline{y})}{\Phi}) (\overline{t})}}
}
$$

\begin{example}\label{ex:calculus:straight-line}
  We show the first step of the symbolic execution of the
  straight-line program ``\code{if (k!=0) \{ r=k-1; r=r+1; \} return
    r}'' with premise
  $\Gamma\equiv\code{k}>0$ and under a similar event update
  $\update\equiv\upP{\startEv(\code{m},\code{k},\code{i}')}$ as in
  Example~\ref{ex:statement-leading-updates}. The first statement is a
  conditional, so rule \ruleName{Cond} is applied. Observe that
  $\update(\code{k!=0})$ evaluates to $\code{k !=
    0}$ and is subsumed by $\code{k > 0}$.
  We show only the left premise as the right premise is immediately
  closed due to $\code{k > 0}$. The final sequent is the result of
  applying rule \ruleName{Assign} twice, followed by an application of
  the \ruleName{Return} rule. An unabbreviated version is in
  appendix~\ref{app:extended-examples}.
\begin{prooftree}
  \AxiomC{\LeftLabel{}\text{(symbolic execution finished)}} 
  \UnaryInfC{\RuleL{}$\seq{\code{k}>0}{\judge{\update\upP{\code{r}:=\code{k}-1}\upP{\code{r}:=\code{r}+1} \upP{\finishEv(\code{m},\code{r},\code{i}')}\upP{\code{res}_0:=\code{r}}}{\Phi}}$}  
  \def\extraVskip{-0.25pt}
  \UnaryInfC{$\vdots$} 
  \def\extraVskip{1.25pt}
  \UnaryInfC{\RuleL{Cond}$\seq{\code{k}>0}{\judge{\update\,\mbox{\lstinline|if (k!=0) \{r=k-1; r=r+1;\} return r|}}{\Phi}}$} 
\end{prooftree}
\end{example}


\mycomment{
We want to prove that for any $\sigma$, $\eval{\upP{v:=e};s}{\sigma}=\eval{v:=e;s}{\sigma}$.
By Proposition~\ref{prop:glob-sem-comp}, we have that 
$$\eval{r;s}{\sigma}=\eval{r}{\sigma} \chopSem \eval{s}{\sigma'}, \ \text{where} \ last(\eval{r}{\sigma})=\sigma'$$
By Proposition~\ref{prop:glob-sem-comp-update}, we have that 
$$\eval{\update s}{\sigma}=\eval{\update }{\sigma} \chopSem \eval{s}{\sigma'}, \ \text{where} \ last(\eval{\update }{\sigma})=\sigma'$$
Since $$\Blue{\eval{\upP{v:=e}}{\sigma} = \singleton{\sigma}\cons \sigma[v\mapsto\valB{\sigma}{e}] = \eval{v:=e}{\sigma}}$$ we have
\begin{align*}
  &\eval{\upP{v:=e}s}{\sigma}=\\
  &\Blue{\eval{\upP{v:=e}}{\sigma}} \chopSem \eval{s}{\sigma[v\mapsto\valB{\sigma}{e}]}=\\
  &\Blue{\singleton{\sigma}\cons \sigma[v\mapsto\valB{\sigma}{e}]} \chopSem \eval{s}{\sigma[v\mapsto\valB{\sigma}{e}]}=&\text{\emph{\footnotesize (this step can be omitted)}}\\
  &\Blue{\eval{v:=e}{\sigma}} \chopSem \eval{s}{\sigma[v\mapsto\valB{\sigma}{e}]}=\\
  &\eval{v:=e;s}{\sigma}
\end{align*}
}

\subsection{Procedure Contracts}
\label{sec:invar-spec-trac}

Specifying and verifying trace contracts for each procedure allows us
to verify procedure calls in a modular way.  First we show how to
specify a procedure $m$ with a contract $\mathbf{C}_m$, then we
present the rule $\ruleName{ProcedureContract}$ that is used to prove
$\mathbf{C}_m$ in our calculus.
Let
\begin{equation}
\mathbf{C}_m=\forall
n,i.(pre_m(n)\rightarrow\judge{m(n)}{\Phi_m(n,i)\chop \stateFml{\code{res}_i\doteq f_m(n)}})\enspace,\label{eq:contract}
\end{equation}
where $f_m(n)$ is the result of $m$ given input $n$.\footnote{For technical
reasons, the formalization of contract $\mathbf{C}_m$ uses a procedure
call outside of an assignment. Its local trace semantics
$\valP{}{m(e)} =
\concatTr{\callEvP{\sigma}{m,\valP{}{e},id}}{\contF{}{\zero}}$ is the
same as for procedure calls with assignment, except for the empty
continuation.}
%
Eq.~\eqref{eq:contract} can be seen as the generalization of
$\forall n.(\{pre_m(n)\}m(n)\{\code{res}\doteq f_m(n)\})$, a state-based
contract, where $n$ is a first-order parameter.
%
%
When proving correctness of trace contracts $\mathbf{C}_m$ of
recursive procedures $m$ \cite{ABBHS16,Hofmann97,Oheimb99}, one
establishes that $\Phi_m$ is a \emph{specification invariant}
for the implementation of~$m$. One proves that the inlined body of $m$
respects $\mathbf{C}_m$ and when symbolic execution arrives at a
recursive call to $m$, one can assume that $m$ holds already. This
yields partial correctness.  We generalize this approach to
traces in the following rule for recursive self-calls:
$$
\seqRule{ProcedureContract}{
\sequent{pre_m(n'),\,\mathbf{C}_m}{\judge{\inline(m,n',i')}{\Phi_m(n',i')}}
\qquad i',n'\text{ fresh}
}{
\sequent{}{\mathbf{C}_m}
}
$$

The rule expresses that for any parameter $n'$ and any recursion depth
$i'$, the trace specification $\Phi_m(n',i')$ is an invariant for the
inlined procedure body, where $\mathbf{C}_m$ can be assumed in recursive
calls. To represent inlining succinctly, we use
\begin{equation}
\label{eq:inline}
\inline(m,e,i) = \upl\startEv(m,e,i)\upr\,e'=e; mb[e'/p]\enspace,
\end{equation}
where $p$ is the procedure parameter, and $e'$ is fresh.

In general, there might be more than one procedure call, so the
assumption in (\textsf{ProcedureCall}) should actually be
$\bigwedge_{m\in\methods{P}}\mathbf{C}_m$.

\begin{example}\label{ex:procedure-contract-rule}
  Symbolic execution for procedure \code{m} from
  Example~\ref{lst:running-example} with contract
  $ \mathbf{C}_{\text{\scriptsize{\code{m}}}}=\forall
  n,i.(\judge{\code{m}(n)}{\Phi_{\text{\scriptsize{\code{m}}}}(n,i)\chop
    \stateFml{\code{res}_i\doteq n}})$, with
  $\update=\upP{\startEv(\code{m},\code{n}',\code{i}')}$:
  \begin{prooftree}
    \AxiomC{(here symbolic execution of procedure body starts)}
    \UnaryInfC{\RuleL{}$\seq{\code{n}'\geq 0,\,\mathbf{C}_{\text{\scriptsize{\code{m}}}}}{\judge{\update\upP{\code{k}':=\code{n}'}\upP{\code{r}':=0} \code{s}[\code{k}/\code{k}',\code{r}/\code{r}']}{\Phi_{\text{\scriptsize{\code{m}}}}(\code{n}',\code{i}')}}$}
	\noLine
	\def\extraVskip{-2.5pt}
	\UnaryInfC{$\vdots$} 
	\noLine
    \def\extraVskip{2pt}
    \UnaryInfC{$\seq{\code{n}'\geq 0,\,\mathbf{C}_{\text{\scriptsize{\code{m}}}}}{\judge{\inline(\code{m},\code{n}',\code{i}')}{\Phi_{\text{\scriptsize{\code{m}}}}(\code{n}',\code{i}')}}$}
    \UnaryInfC{\RuleL{ProcedureContract}$\seq{}{\mathbf{C}_{\text{\scriptsize{\code{m}}}}}$}
  \end{prooftree}
\end{example}

\subsection{Procedure Calls}
\label{sec:trace-abstraction}

As Example~\ref{ex:procedure-contract-rule} shows, after applying rule
\ruleName{(ProcedureContract)}, a procedure body can be fully symbolically
executed, where recursive calls in assignments are simply handled by
the \ruleName {(Assign)} rule. The semantics of elementary updates with a
call on the right ensures that this is sound provided that we make the
assumption that a procedure call has no side effects.
%
In this way, we can retrofit standard, state-based verification into
our trace-based framework.\footnote{We stress that the absence of side
  effects is for ease of presentation and not a fundamental limitation of our approach. How to model side effects in symbolic execution is well-known~\cite{ABBHS16}.}
%
Complete symbolic execution of a procedure body with a recursive call yields a judgment
of the form
\begin{equation}
  \judge{\mathcal{U}_1\upl r:=m(e)\upr\,\mathcal{U}_2}{\Phi_1\chop\,\Phi_m(e,k)\chop\,\Phi_2}\enspace.\label{eq:judge-shape}
\end{equation}
The updates are here followed by the ``empty'' program.
For the trace specification the above shape is also justified, because
typically we have
$\Phi_m=\mu X(y).(\Phi_1(y)\lor\cdots\lor\Phi_n(y))$. This
specification one strengthens into $\mu X(y).\Phi_i(y)$, where
$\Phi_i$ is the specification case corresponding to the current
symbolic execution path.  At this point, and after symbolic execution
finished, we can use pattern~\eqref{eq:judge-shape} and assumption
$\mathbf{C}_m$ in the following \emph{trace abstraction rule}:
$$
\seqRule{TrAbs}{
\sequent{}{\judge{\mathcal{U}_1}{\Phi_1}}\qquad
\sequent{}{\mathcal{U}_1(pre_m(e)})\qquad
\seq{\mathbf{C}_m}{\judge{\upP{v:=f_m(\mathcal{U}_1(e))}\mathcal{U}_2}{\Phi_2}}
}{
\sequent{\mathbf{C}_m}{\judge{\mathcal{U}_1\upl v:=m(e)\upr\,\mathcal{U}_2}{\Phi_1\chop\,\Phi_m(e,k)\chop\,\Phi_2}}
}
$$
where $f_m(\cdot)$ is the function computed by
$m$.  The trace abstraction rule decomposes the conclusion into three
premises. The first premise verifies that the trace represented by
$\update_1$ conforms to $\Phi_1$. The second premise guarantees that
the precondition of contract $\mathbf{C}_m$ is satisfied.  Contract
$\mathbf{C}_m$ is now implicitly used to guarantee that the trace of
update $\upP{v:=m(e)}$ conforms to $\Phi_m(e,k)$, up to assignment of
the procedure's result to program variable $v$. The task of the third
premise is then to ensure that the trace consisting of the latter
assignment followed by update $\update_2$ conforms to $\Phi_2$.

%
\begin{example}\label{ex:trace-abstraction}
  Trace abstraction is applied in the continuation of
  Example~\ref{ex:procedure-contract-rule}, after the recursive call
  in the body was symbolically executed and moved to an update. At
  this point the goal sequent has the form
  $$
  \code{n}'>0, \mathbf{C}_{\text{\scriptsize{\code{m}}}}\vdash \upP{\update_1}\upP{\code{r}':=\code{m}(\code{n}'-1)}\update_2 : \Phi_1\finiteNoM{\text{\scriptsize{\code{m}}}}\Phi_{\text{\scriptsize{\code{m}}}}(\code{n}'-1,\#(\code{i}'))\finiteNoM{\text{\scriptsize{\code{m}}}}\Phi_2
  $$
  with $\update_1\equiv\upP{\startEv(\code{m},\code{n}',\code{i}')}$,
  $\update_2\equiv\upP{\code{r}'\!:=\!\code{r}'+1}\upP{\finishEv(\code{m},\code{r}',\code{i}')}\upP{\code{res}_{\code{i}'}\!:=\!\code{r}'}$,
  $\Phi_1\equiv\stateFml{\code{n}'>0}\chop\startEv(\code{m},\code{n}',\code{i}')\finiteNoM{\text{\scriptsize{\code{m}}}}$,
  $\Phi_{\text{\scriptsize{\code{m}}}}(\code{n}'-1,\#(\code{i}'))\equiv X_{\text{\scriptsize{\code{m}}}}(\code{n}'-1,\#(\code{i}'))$, and
  $\Phi_2\equiv\finiteNoM{\text{\scriptsize{\code{m}}}}\finishEv(\code{m},\code{n}',\code{i}')\concat\stateFml{\code{res}_{\code{i}'}\doteq
    \code{n}'}$.
  Applying the trace abstraction rule results in the following three sequents:
  \begin{align*}
    \code{n}'>0 & \vdash \judge{\upP{\startEv(\code{m},\code{n}',\code{i}')}}{\stateFml{\code{n}'>0}\chop \startEv(\code{m},\code{n}',\code{i}')\finiteNoM{\text{\scriptsize{\code{m}}}}}\\
\code{n}'>0 & \vdash \code{n}'>0 \\
    \code{res}_{\code{i}'} \doteq \code{n}'-1 & \vdash \upP{\code{r}':=\code{res}_{\code{i}'}}\upP{\code{r}':=\code{r}'+1}\upP{\finishEv(\code{m},\code{r}',0)}\upP{\code{res}_{\code{i}'}:=\code{r}'}:\\[-0.5ex]
                &\hspace*{5cm}\finiteNoM{\text{\scriptsize{\code{m}}}}\finishEv(\code{m},\code{n}',\code{i}')\concat\stateFml{\code{res}_{\code{i}'}\doteq \code{n}'}
  \end{align*}
  The first subgoal is provable with rule (\textsf{Prestate}) and a
  simplification rule, the second is trivial, the third requires
  simplification rules found in Appendix~\ref{app:extended-examples}.
  %
\end{example}

\paragraph{Loops.}
For space reasons, we do not provide rules for dealing with loops.
Conceptually, it is well-known that contracts can be used to specify
loops.  A systematic overview and comparison between invariant-based
and contract-based loop specification is in \cite{Ernst22}. A suitable
adaptation of our contract rules to the case of loops will be the
topic of future work.


\subsection{Soundness}

\begin{definition}[Soundness]
A rule of the calculus is \emph{sound} if the validity 
of the conclusion follows from the validity of the premises.
A calculus is sound if it can  prove only valid statements.
\end{definition}


\begin{theorem}[Calculus soundness]\label{soundness:calculus}
The presented sequent calculus is sound.
\end{theorem}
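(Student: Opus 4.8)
The plan is to prove soundness rule by rule: for each sequent rule it suffices to show that if all premises are valid (true in every state), then the conclusion is valid. By Definition~\ref{def:sequent-semantics}, validity of $\sequent{}{\judge{\update\,s}{\Phi}}$ means that in every state $\sigma$ either some assertion in $\Gamma$ fails or $\eval{\update\,s}{\singleton{\sigma}}$ is undefined or lies in $\evalPhiNoArg{\Phi}$. Since the programming language is deterministic and all rules share the context $\Gamma$ unchanged (except where a first-order assertion is added), the argument localizes to: fix $\sigma$ with all of $\Gamma$ true, assume $\eval{\update\,s}{\singleton{\sigma}}$ defined, and show the resulting trace is in $\evalPhiNoArg{\Phi}$, using that the premises hold in the relevant states. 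The key technical lemma underpinning all the symbolic-execution rules is Proposition~\ref{prop:glob-sem-comp} (compositionality of the program semantics w.r.t.\ the semantic chop), together with its update-analogue: $\eval{u\update\,s}{\singleton{\sigma}} = \eval{u}{\singleton{\sigma}} \chopSem \eval{\update\,s}{\singleton{\sigma'}}$ where $\sigma' = \last(\eval{u}{\singleton{\sigma}})$. This lets us peel off the leading update/statement and match it against the leading conjunct/chop-factor of $\Phi$.

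Concretely, I would treat the rules in groups. \emph{(1) Straight-line rules} (\ruleName{Assign}, \ruleName{Scope}, \ruleName{VarDecl}, \ruleName{Cond}, \ruleName{Return}, \ruleName{Prestate}): each follows by unfolding the local evaluation rules of Fig.~\ref{fig:local} and Fig.~\ref{fig:stmt-update-eval}, applying compositionality, and reading off the matching clause in the formula semantics of Fig.~\ref{fig:semantics-formulas}. For \ruleName{Return} one additionally checks that $currCtx(\update)$ as computed syntactically on updates agrees with the semantic $currCtx$ of the generated trace, so that the emitted $\finishEv(m,e,i)$ carries the right context; for \ruleName{VarDecl} one uses the freshness side condition to argue the renamed variable does not collide, so the semantics is unchanged. \emph{(2) Logical rules} \ruleName{Unfold}: soundness is immediate from the fixed-point clause $\evalPhiS{(\muFml{X(\overline{y})}{\Phi})(\overline{t})}$, since $\mu X.\Phi$ is a fixed point of the monotone semantic transformer (monotonicity holds because the fragment is negation-free), hence equals its one-step unfolding; plus the standard Gentzen first-order rules, which are sound for the first-order fragment of assertions in the usual way.

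\emph{(3) The recursion rules} \ruleName{ProcedureContract} and \ruleName{TrAbs} are the heart of the proof and the main obstacle. For \ruleName{ProcedureContract} the argument is a well-founded induction on the number of nested recursive calls in the (terminating) computation of $m(n')$: one shows that if $\mathbf{C}_m$ holds for all strictly smaller call depths, then validity of the premise forces the inlined body's trace to satisfy $\Phi_m(n',i')\chop\stateFml{\code{res}_i\doteq f_m(n')}$; the base case is a computation with no self-call. This requires relating the inlined-body semantics (via \eqref{eq:inline} and the \callEv/\pushEv, \retEv/\popEv bracketing of the composition rules of Sect.~\ref{sec:local-trace-semantics}) to the trace generated by an actual call $m(n')$, for which one invokes the synchronous-call behavior of rules~\eqref{eq:rulecall-rule}--\eqref{eq:ruleret-rule} and adequacy of traces. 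For \ruleName{TrAbs} one uses compositionality to split $\eval{\mathcal{U}_1\upl v:=m(e)\upr\mathcal{U}_2}{\singleton{\sigma}}$ into the $\mathcal{U}_1$-part, the call-part, and the $\mathcal{U}_2$-part; the first premise places the first part in $\evalPhiNoArg{\Phi_1}$; the second premise together with assumption $\mathbf{C}_m$ places the call-part in $\evalPhiNoArg{\Phi_m(e,k)}$ (here crucially using that procedure calls have no side effects, so the post-call state is $\mathcal{U}_1(\sigma)$ with $v$ updated to $f_m(\mathcal{U}_1(e))$); the third premise handles the tail. The delicate points will be (i) justifying that the \emph{syntactic} events $\startEv/\finishEv$ in the logic correctly mirror the \emph{semantic} \callEv/\pushEv and \retEv/\popEv pairs (the bridge is the formula semantics of $\startEv$, $\finishEv$ in Fig.~\ref{fig:semantics-formulas} and the evaluation of event updates in Fig.~\ref{fig:stmt-update-eval}), and (ii) carrying the induction hypothesis $\mathbf{C}_m$ correctly through the depth parameter, ensuring that the fresh identifier $\#(i')$ and the fresh $n'$ make the instances of the contract independent. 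I expect (i)--(ii), i.e.\ the interface between the global trace semantics and the procedure-modular contract mechanism, to be where most of the real work lies; the straight-line and logical rules are routine once compositionality is in hand.
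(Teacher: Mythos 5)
Your plan is correct and follows essentially the same route as the paper: soundness is reduced to local soundness of each rule, with the compositionality propositions (Prop.~\ref{prop:glob-sem-comp}, \ref{prop:glob-sem-comp-update}) driving the straight-line rules, fixed-point unfolding (Tarski) for \ruleName{Unfold}, a decomposition of the trace into the $\mathcal{U}_1$-part, call-part, and $\mathcal{U}_2$-part (using $\mathbf{C}_m$ and absence of side effects) for \ruleName{TrAbs}, and an induction on recursion depth for \ruleName{ProcedureContract}, which is exactly the Oheimb-style argument the paper itself invokes (with validity parameterised by a depth bound $n$). The points you flag as delicate are the same ones the paper leaves at sketch level.
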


\begin{proof}
Direct consequence of the local soundness of each rule. Proof sketches for the soundness of the rules are given in  Appendix~\ref{app:proofs}.
\end{proof}



\section{Related Work}
\label{sec:related-work}

We specify symbolic traces, so it is not surprising to find
related work in extensions of LTL model checking and program
synthesis. CaReT logic~\cite{AEM04} can specify the call structure of
programs which is modeled as pushdown automata.  It has
abstract versions of the temporal next and until operators that jump
over balanced calls. The call and event structure is fixed.
In~\cite{FrohmeSteffen21a} \emph{Systems of Procedural Automata} model procedure calls with context-free rules
for atomic actions and procedure calls. The
goal is to learn automata from observed traces.
Temporal Stream Logic~\cite{FKPS19} features uninterpreted function
terms and updates in addition to standard LTL operators, 
aiming at program synthesis of B\"uchi stream automata and 
FPGA programs.
In each case the setup is finite or propositional, not first-order.
Matching and fixed-point specifications are not possible.

Process logic~\cite{HarelKP80} and interval temporal logic
\cite{HalpernShoham91} feature the chop operator, which was taken up
by Nakata \& Uustalu~\cite{NakataUustalu15}, who used infinite
symbolic traces to characterize non-terminating loops. These were
extended to a rich dynamic logic~\cite{BubelCHN15} and equipped with
events and a local trace semantics \cite{DHJPT17}.

Cyclic proof systems to prove inductive claims, including contracts of
recursive procedures, date back to Hoare's axiomatization of recursive
procedures \cite{Hoare71}. Gurov \& Westman \cite{gur-wes-18}
provide an abstract framework based on denotational semantics to 
formally justify (cyclic) procedure-modular verification, while
Brotherston \& Simpson \cite{BrotherstonSimpson11} investigate the
expressive power of sequent calculi for cyclic and infinite arguments
that are often used as a basis in deductive verification. Recursive
predicate specifications are standard, for example, in separation
logic \cite{ParkinsonBierman05}, but not used to specify program
traces.  Several papers \cite{DamGurov02,SprengerDam03b} present a
first-order $\mu$-calculus, but do not feature explicit programs or
events.

Constructive logical frameworks \cite{Coq04,Isabelle02} feature
abstract specifications in terms of typed higher-order logic formulas,
but they target functional languages and feature neither events nor
traces.



\section{Conclusion and Future Work}
\label{sec:concl-future-work}

In this paper we established the fundamental theory of trace-based
contracts, generalizing specification and deductive verification with
state-based contracts. The ingredients are
\begin{enumerate*}[label=(\roman*)]
\item an expressive fixed-point logic to characterize
  complex event structures over recursive procedures;
\item a uniform, local trace-event semantics for programs, state
  updates, and the trace logic;
\item a sound symbolic execution calculus with rules to prove
  trace contracts and programs with recursive procedures.
\end{enumerate*}
Programs and trace contracts communicate semantically via a
configurable set of events. This permits fully abstract specification
of programs and a highly flexible approach to specify concepts like
user input, concurrency, etc.
Indeed, to harvest these opportunities arising from an expressive,
trace-based specification and verification approach, will be the topic
of follow-up papers, where we will look at concurrent programs, loops,
and complex case studies.

Further interesting questions concern the precise expressivity of our
trace logic, in particular, which events are required for completeness.





\bibliography{reiner}
\bibliographystyle{abbrv}


\clearpage
\appendix

\section{Proofs}
\label{app:proofs}

\subsection{Trace Adequacy}
\label{sect:trace-adequacy}

We only consider traces consistent with local evaluation and
composition rules: at most one variable update occurs between two
consecutive states and the events in a trace properly record call
events, context switches, and return events.

\begin{definition}[Trace Adequacy] 
  \label{def:trace-adequacy}
  We say $\tau = \tau_1 \chopSem \tau_2$ is an \emph{adequate} trace
  if either $\tau_1=\tau_2=\singleton{\sigma}$ or $\tau_1$ is adequate
  and exactly one of the following holds:
  \begin{enumerate}
  \item $\tau_2 = \consTr{\singleton{\sigma}}{\sigma[x\mapsto
      v]}$ \label{def:item:adequacy-update}
  \item $\tau_2 = \callEvP{\sigma}{\_,\_,id}$,
    $\tau_1\in\finiteNoEv{\pushEvP{}{(\_,id)},\,\popEvP{}{(\_,id)},\,\callEvP{}{\_,\_,id}}$
    and\\
    $\lastEv{\tau_1}\not\in\{\callEv,\retEv\}$\label{def:item:adequacy-callEv}
  \item $\tau_2 = \retEvP{\sigma}{\_}$,
    $\lastEv{\tau_1}\not\in\{\callEv,\retEv\}$\label{def:item:adequacy-retEv}
  \item $\tau_2 = \pushEvP{\sigma}{(m,id)}$,
    $\tau_1 \in\finiteNoEv{} \,
    \callEvP{\sigma}{m,\_,id}$\label{def:item:adequacy-pushEv}
  \item $\tau_2 = \popEvP{\sigma}{(m,id)}$,
    $\tau_1 \in\finiteNoEv{} \, \retEvP{\sigma}{m}$,
    $currCtx(\tau_1)= (m,id)$.\label{def:item:adequacy-popEv}
  \end{enumerate}
  $\Traces$ is the set of all adequate traces.
\end{definition}

The second clause expresses that call identifiers are unique as well
as one part of the requirement that call/return events are followed
immediately by push/pop events. That requirement is ensured together
with the remaining clauses.


The events $\callEv$ and $\retEv$ are always followed by $\pushEv$ and $\popEv$ respectively.
Therefore they can be the last event of a trace only if they occur at the end of the trace.
This result is formalized in the following lemma.
\begin{lemma}\label{lemma:lastEv}
  If $\semSeq{\semPair{\singleton{\sigma}}{s}}{\semPair{\tau}{s'}}{n}$ and $\callEv=lastEv(\tau)$, or $\retEv=lastEv(\tau)$,
  then $\tau=\finiteNoEv{} \, \callEv_{last(\tau)}$ or $\tau=\finiteNoEv{}\, \retEv_{last(\tau)}$, respectively.
\end{lemma}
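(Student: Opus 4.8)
The plan is to induct on the number of composition steps~$n$, analysing the last step and invoking the induction hypothesis on the prefix; the whole argument rests on one auxiliary observation about local evaluation. First I would establish, by a routine structural induction on~$s$, that whenever $\valP{}{s}=\tau'\concat\cont{s'}$ and the last event occurring in the small-step trace~$\tau'$ is a call (resp.\ return) event, then $\tau'$ is \emph{exactly} the call-event trace $\callEvP{\sigma}{m,v,id}$ (resp.\ the return-event trace $\retEvP{\sigma}{v}$), where $\sigma$ is the state in which $s$ is evaluated. This holds because among the clauses of Fig.~\ref{fig:local} (and, if one also allows leading updates, Fig.~\ref{fig:stmt-update-eval}) only those for $\assignStmt{x}{m(e)}$ and $\returnStmt{e}$ can leave a call- resp.\ return-event as the last event of the emitted small step; the clauses for $r;s$ and $\Simple{while}$ merely inherit the small step of their leading statement, while all remaining clauses produce small steps that are either event-free or end in a $\pushEv$ or $\popEv$ event. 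I would also note that $\finiteNoEv{}=\finiteNoEv{\emptyset}$ is simply the set of all non-empty finite traces, so that ``$\rho\in\finiteNoEv{}$'' is free for any non-empty finite~$\rho$, and the content of the conclusion is precisely that $\tau$ ends in the call- (resp.\ return-) event trace emitted at its last state.

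For the induction itself, the base case $n=0$ is vacuous: there $\tau=\singleton{\sigma}$ contains no event, so $\lastEv{\tau}=\mathit{NoEvent}$ and the hypothesis of the lemma is false. For $n=k+1$, write the run as $\semSeq{\semPair{\singleton{\sigma}}{s}}{\semPair{\tau_0}{s_0}}{k}$ followed by one step $\semPair{\tau_0}{s_0}\to\semPair{\tau}{s'}$ and split on the composition rule applied in that last step. If it is \ruleName{Call} then $\tau=\tau_0\chopSem\pushEvP{\sigma_0}{ctx}$ ends with a $\pushEv$ event, and \ruleName{Return} analogously appends a $\popEv$ event, so in both cases $\lastEv{\tau}\notin\{\callEv,\retEv\}$ and the claim is vacuous. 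If it is \ruleName{Progress}, then $\tau=\tau_0\chopSem\tau'$ where $\valP{}{s_0}=\tau'\concat\cont{s'}$ is evaluated in $\last(\tau_0)$, and the side conditions of \ruleName{Progress} say precisely that $\tau_0$ does not end in a call- or return-event trace. Here the induction hypothesis is the key step: were $\lastEv{\tau_0}$ a call or a return event, applying the hypothesis to the $k$-step prefix would give that $\tau_0$ ends in the corresponding event trace, contradicting the side condition; hence $\lastEv{\tau_0}\notin\{\callEv,\retEv\}$. Now assume $\lastEv{\tau}$ is a call event (the return case is symmetric). The last event of $\tau$, if any, is the last event of $\tau'$, or, if $\tau'$ is event-free, the last event of $\tau_0$; the latter sub-case is impossible since then $\lastEv{\tau}=\lastEv{\tau_0}$ is not a call event. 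So the last event of $\tau$ comes from $\tau'$, and the auxiliary observation forces $\tau'=\callEvP{\last(\tau_0)}{m,v,id}$. Therefore $\tau=\tau_0\chopSem\callEvP{\last(\tau_0)}{m,v,id}$, which ends in $\last(\tau_0)=\last(\tau)$, i.e.\ $\tau\in\finiteNoEv{}\,\callEv_{\last(\tau)}$, as required.

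I expect the only genuine difficulty to be the bookkeeping around two closely related but distinct notions: ``$\lastEv{\tau_0}$ is a call event'', which constrains only the last \emph{event} of $\tau_0$ and permits further state updates after it, versus ``$\tau_0$ ends in a call-event trace'', the stronger property that appears both in the \ruleName{Progress} side condition and in the conclusion of the lemma. The induction hypothesis is exactly what bridges the two, so some care is needed to state it (and the auxiliary observation) strongly enough; beyond that, the structural induction underpinning the auxiliary observation is a straightforward case check, and the remaining rule cases are immediate.
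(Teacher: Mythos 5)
Your proof is correct. Note, though, that the paper itself gives no proof of Lemma~\ref{lemma:lastEv} at all: it is stated bare, justified only by the informal remark preceding it that \callEv and \retEv are always immediately followed by \pushEv and \popEv, so your argument is supplying a proof the paper omits rather than replicating one. What you do is exactly the right formalization of that remark, and structurally it mirrors the paper's own proof of Lemma~\ref{lemma:n-adequacy}: induction on the number of composition steps, with a case split on whether the last step is \ruleName{Call}, \ruleName{Return} (both vacuous, since they append \pushEv resp.\ \popEv), or \ruleName{Progress}. Your two added ingredients are the ones that genuinely carry the proof and that the paper leaves implicit: (i) the observation that among the local evaluation clauses of Fig.~\ref{fig:local} only the call-assignment and \lstinline{return} clauses emit call/return events, and then the emitted small step is exactly the corresponding event trace; and (ii) the use of the induction hypothesis to convert the \ruleName{Progress} side conditions (``$\tau_0$ does not end in a call/return event trace'') into the statement ``$\lastEv{\tau_0}\notin\{\callEv,\retEv\}$'', which is what rules out the last event of $\tau$ coming from $\tau_0$ when the new small step is event-free. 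That distinction between ``last event is a call'' and ``trace ends in the call-event trace'' is precisely the content of the lemma, and your handling of it is sound.
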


If $\semSeq{\semPair{\tau}{s}}{\semPair{\tau'}{s'}}{*}$ in exactly $n$ steps then we write $\semSeq{\semPair{\tau}{s}}{\semPair{\tau'}{s'}}{n}$.

\begin{lemma}\label{lemma:n-adequacy}
  Given a program $s$, if $\semSeq{\semPair{\singleton{\sigma}}{s}}{\semPair{\tau}{s'}}{n}$ then $\tau$ is adequate.
\end{lemma}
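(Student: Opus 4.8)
The plan is to prove Lemma~\ref{lemma:n-adequacy} by induction on the number $n$ of composition steps. In the base case $n=0$ we have $\tau=\singleton{\sigma}$, which is adequate by the first clause of Def.~\ref{def:trace-adequacy} (taking $\tau_1=\tau_2=\singleton{\sigma}$). For the inductive step, assume $\semSeq{\semPair{\singleton{\sigma}}{s}}{\semPair{\tau}{s'}}{n+1}$. Split this into $\semSeq{\semPair{\singleton{\sigma}}{s}}{\semPair{\tau''}{s''}}{n}$ followed by a single step $\semPair{\tau''}{s''}\to\semPair{\tau}{s'}$. By the induction hypothesis $\tau''$ is adequate, and $\tau$ extends $\tau''$, so it suffices to analyze which composition rule produced the last step and check that the corresponding clause of Def.~\ref{def:trace-adequacy} is met.

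Next I would do the case analysis over the three composition rules. For the \textsc{Progress} rule~\eqref{eq:ruleprogress-rule}, the side conditions say that $\tau''$ does not end with a \callEv or \retEv event, and $\valP{}{s''}=\concatTr{\tau'}{\cont{s'}}$; the extension $\tau''\chopSem\tau'$ is driven by the local evaluation rules of Fig.~\ref{fig:local}, so $\tau'$ is either a single extra state, or a pair $\consTr{\singleton{\sigma}}{\sigma[x\mapsto v]}$ (assignment / variable declaration), or ends in a \callEv (clause~\ref{def:item:adequacy-callEv}) or a \retEv (clause~\ref{def:item:adequacy-retEv}); in each subcase one checks the relevant clause, using the \textsc{Progress} side conditions and the freshness conditions on $id$ and on $\code{res}_{id}$ built into the local rules for $\assignStmt{x}{m(e)}$ and variable declaration. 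For the \textsc{Call} rule~\eqref{eq:rulecall-rule}, $\tau''$ ends in a \callEv (by the rule's side condition, $\tau''\in\finiteNoEv{}\,\callEvP{\sigma}{m,v,id}$) and the step appends $\pushEvP{\sigma}{ctx}$ with $ctx=(m,id)$; this is precisely clause~\ref{def:item:adequacy-pushEv}. For the \textsc{Return} rule~\eqref{eq:ruleret-rule}, $\tau''$ ends in a \retEv and $currCtx(\tau'')=(m,id)$, and the step appends a state update $\sigma[\code{res}_{id}\mapsto v]$ followed by $\popEvP{\sigma[\code{res}_{id}\mapsto v]}{ctx}$; the state update matches clause~\ref{def:item:adequacy-update} and the subsequent \popEv matches clause~\ref{def:item:adequacy-popEv}, using the rule's side condition for the $currCtx$ requirement.

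The main obstacle I expect is bookkeeping rather than conceptual: one has to be careful that the "exactly one of the following holds" phrasing in Def.~\ref{def:trace-adequacy} forces the extension $\tau$ to decompose as $\tau''\chopSem\tau_2$ with a \emph{single} $\tau_2$ satisfying one clause, whereas the \textsc{Return} rule appends \emph{two} atomic pieces (a state and a \popEv event). This is handled by observing that one application of a composition rule can be viewed as a short sequence of adequacy-extensions — i.e.\ the lemma should really be read as "$\tau$ is reachable by finitely many adequacy steps from $\singleton{\sigma}$", and the \textsc{Return} step contributes two of them in order (first clause~\ref{def:item:adequacy-update}, then clause~\ref{def:item:adequacy-popEv}), which is legitimate because after the intermediate state the prefix is still adequate and its last event is still the \retEv, so clause~\ref{def:item:adequacy-popEv} applies. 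The other delicate point is verifying that the side conditions of the \textsc{Progress}, \textsc{Call}, and \textsc{Return} rules line up exactly with the hypotheses of the matching adequacy clauses (in particular the "$\lastEv{\tau_1}\not\in\{\callEv,\retEv\}$" conditions and the uniqueness of $id$), for which Lemma~\ref{lemma:lastEv} is the key auxiliary fact: it guarantees that a \callEv or \retEv can only be the last event of a prefix when it is literally at the end, which is what lets the \textsc{Call}/\textsc{Return} side conditions be stated via the $\finiteNoEv{}$ schematic traces.
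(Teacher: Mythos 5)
Your proposal is correct and follows essentially the same route as the paper's proof: induction on the number of composition steps, a case analysis on which of the three composition rules (\textsc{Progress}, \textsc{Call}, \textsc{Return}) produced the last step, and Lemma~\ref{lemma:lastEv} to discharge the ``last event is not \callEv/\retEv'' conditions of clauses~\ref{def:item:adequacy-callEv} and~\ref{def:item:adequacy-retEv}. The only deviations are minor refinements rather than a different argument: you start the induction at $n=0$ where the paper starts at $n=1$ (arguing the first step is a variable-declaration \textsc{Progress} step), and you explicitly account for the intermediate state $\sigma[\code{res}_{id}\mapsto v]$ appended by the \textsc{Return} rule as a clause-\ref{def:item:adequacy-update} extension followed by a clause-\ref{def:item:adequacy-popEv} extension, a bookkeeping detail the paper's own proof silently elides.
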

\begin{proof}
  We prove the lemma by induction on $n$, i.e. on the number of applications of composition rules.

  \textbf{Base case (n=1)}.
  For non trivial programs, $s$ has the form ``$\code{x} \ d; r$'', i.e. at least a variable $\code{x}$ is declared. 
  Otherwise, assignment and procedures calls cannot occur. Therefore 
  if $n=1$ then the rule applied is the \emph{progress rule}.
  $$\semSeq{\semPair{\singleton{\sigma}}{\code{x}\, d; r}}{\semPair{\singleton{\sigma}\cons \sigma[\code{x}\mapsto 0]}{d; s}}{}$$
  where $\singleton{\sigma}\cons \sigma[\code{x}\mapsto 0]$ satisfies condition (\ref{def:item:adequacy-update}) of Def.~\ref{def:trace-adequacy}. 

  \textbf{Inductive Step.}
  Let's assume as IH that $\semSeq{\semPair{\singleton{\sigma}}{s}}{\semPair{\tau}{s'}}{n}$ with $\tau$ adequate and $s'\neq \zero$.
  Let also $\sigma'=last(\tau)$.
  There are three main cases: 
  \begin{itemize}
    \item $\tau=\tau'\chopSem\callEvP{\sigma'}{m,\_,i}$:  condition~(\ref{def:item:adequacy-pushEv}) is satisfied since applying \emph{call rule} we have
    $$\semSeq{\semPair{\tau}{s'}}{\semPair{\tau'\chopSem\callEvP{\sigma'}{m,\_,i}\chopSem \pushEvP{\sigma'}{(m,i)}}{s''}}{}$$

    \item $\tau=\tau'\chopSem\retEvP{\sigma'}{\_}$: condition~(\ref{def:item:adequacy-popEv}) is satisfied since applying \emph{return rule} we have
    $$\semSeq{\semPair{\tau}{s'}}{\semPair{\tau'\chopSem\retEvP{\sigma'}{\_ }\chopSem \popEvP{\sigma'}{(m,id)}}{s''}}{}$$
    where $currCtx(\tau') = (m,id)$ by definition.

    \item $\tau\neq\tau'\chopSem\callEv_{\sigma'}$ and $\tau\neq\tau'\chopSem\retEv_{\sigma'}$: the \emph{progress rule} applies and we have three cases
    \begin{itemize}
      \item $\semSeq{\semPair{\singleton{\sigma}}{s}}{\semPair{\tau}{x:=m(\many{e});r)}}{n}$:  a \callEv with a fresh call identifier is generated.
      Since,  by Lemma~\ref{lemma:lastEv} we have that $last(\tau) \not\in \{\callEv, \retEv\}$ condition~(\ref{def:item:adequacy-callEv}) is satisfied.
      \item $\semSeq{\semPair{\singleton{\sigma}}{s}}{\semPair{\tau}{\returnStmt{e};r)}}{n}$:  a \retEv is generated.
      Since,  by Lemma~\ref{lemma:lastEv} we have that $last(\tau) \not\in \{\callEv, \retEv\}$ condition~(\ref{def:item:adequacy-callEv}) is satisfied.
      \item Otherwise  we have $\semSeq{\semPair{\tau}{s'}}{\semPair{\tau'}{s'')}}{n}$ where either $\tau'=\tau$ adequate by IH, or
      $\tau'=\tau\cons\sigma'[x\mapsto v]$, that satisfies condition~(\ref{def:item:adequacy-update}).
    \end{itemize}
  \end{itemize}
  Therefore, given $$\semSeq{\semPair{\singleton{\sigma}}{s}}{\semPair{\tau}{s'}}{n}$$ with $\tau$ adequate 
  we have $$\semSeq{\semPair{\singleton{\sigma}}{s}}{\semPair{\tau'}{s''}}{n+1}$$ where $\tau'$ is adequate.
  \qed
\end{proof}

\begin{theorem}[Adequacy of Program
  Semantics]\label{theorem:adequacy-semantics} 
  The program semantics $\eval{s}{\singleton{\sigma}}$ is an adequate
  trace for any state $\sigma$ and terminating program $s$.
\end{theorem}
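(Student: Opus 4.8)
The plan is to obtain the theorem as an immediate corollary of Lemma~\ref{lemma:n-adequacy}. First I would unfold Definition~\ref{def:glob-sem}: the hypothesis that $s$ is terminating means, by the definition of a program trace, that the maximal sequence of composition steps started from $\singleton{\sigma}, \cont{s}$ is finite and ends in a configuration with empty continuation, i.e.\ $\singleton{\sigma}, \cont{s} \overset{*}{\to} \singleton{\sigma} \chopSem \eval{s}{\singleton{\sigma}}, \cont{\zero}$, say in $n$ steps. Since the semantic chop of the singleton trace $\singleton{\sigma}$ with any trace whose first state is $\sigma$ returns that trace unchanged, this final configuration is just $\eval{s}{\singleton{\sigma}}, \cont{\zero}$; in particular $\eval{s}{\singleton{\sigma}}$ is precisely the trace actually reached after $n$ composition steps.

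Next I would instantiate Lemma~\ref{lemma:n-adequacy} to this derivation, taking $\tau := \eval{s}{\singleton{\sigma}}$ and $s' := \zero$. The lemma guarantees that every trace reachable from $\singleton{\sigma}, \cont{s}$ in $n$ composition steps is adequate in the sense of Definition~\ref{def:trace-adequacy}, hence $\eval{s}{\singleton{\sigma}} \in \Traces$, which is exactly the claim. So the entire content of the proof is already packaged in Lemma~\ref{lemma:n-adequacy}, which is in turn proved by induction on the number of composition steps, with a case split on the last rule applied (\textsc{Progress}, \textsc{Call}, \textsc{Return}) and, for \textsc{Progress}, on the shape of the leading statement; the call and return cases rely on Lemma~\ref{lemma:lastEv} to discharge the side condition $\lastEv{\tau_1} \notin \{\callEv, \retEv\}$ appearing in clauses~(2)--(3) of Definition~\ref{def:trace-adequacy}.

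I do not expect a genuine obstacle at the level of the theorem itself. The only points requiring care are (i) invoking termination correctly, so that the derivation is finite and really reaches $\cont{\zero}$, and (ii) the edge case of a program with no local variable declarations, where the single applicable composition step produces a one-state trace $\singleton{\sigma}$; this is adequate by the base clause $\tau_1 = \tau_2 = \singleton{\sigma}$ of Definition~\ref{def:trace-adequacy}, so it fits the induction of Lemma~\ref{lemma:n-adequacy} without extra work. Everything else is the routine bookkeeping already carried out in that lemma, so the theorem follows directly.
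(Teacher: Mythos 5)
Your proposal is correct and matches the paper's own proof: the theorem is obtained exactly as you describe, by unfolding the definition of the program semantics for a terminating program and instantiating Lemma~\ref{lemma:n-adequacy} at the final configuration with empty continuation $\cont{\zero}$. Your additional remarks on the edge case and on how the lemma itself is proved (induction on the number of composition steps, using Lemma~\ref{lemma:lastEv}) are consistent with the paper's argument and add nothing that would change the route.
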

\begin{proof}
  From Lemma~\ref{lemma:n-adequacy} it follows that given a program $s$ if $\semSeq{\semPair{\singleton{\sigma}}{s}}{\semPair{\tau}{\zero}}{*}$ 
  then $\tau$ is adequate. In other words $\eval{\sigma}{s}$ is adequate.
  \qed
\end{proof}

\subsection{Soundness of the Calculus}
\label{app:soundness-calculus}

\begin{proposition} 
   \label{prop:glob-sem-comp}
     $\eval{r;s}{\tau}=\tau' \chopSem \eval{s}{\tau'}$, with $\tau' = \eval{r}{\tau}$.
 \end{proposition}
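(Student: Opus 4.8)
The plan is to show that the (unique, by determinism) composition-rule computation starting from $\tau, \cont{r;s}$ splits into a first phase that exactly mirrors the computation of $r$ from $\tau$, followed by a second phase that exactly mirrors the computation of $s$ started from $\eval{r}{\tau}$. Once this factorisation is established, transitivity of $\overset{*}{\to}$ together with determinism and exhaustiveness of the composition rules immediately yield $\eval{r;s}{\tau} = \tau' \chopSem \eval{s}{\tau'}$ with $\tau' = \eval{r}{\tau}$. Two auxiliary lemmas carry the two phases: a \emph{continuation-lifting} lemma and a \emph{trace-prefix-invariance} lemma.

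First I would prove: if $\tau_1, \cont{s_1} \to \tau_2, \cont{s_2}$ is a single composition step, then $\tau_1, \cont{s_1;\hat s} \to \tau_2, \cont{s_2;\hat s}$ for every statement $\hat s$, under the convention $\cont{\zero;\hat s} = \cont{\hat s}$, so a step that finishes $s_1$ moves the continuation directly to $\hat s$. This is a routine case distinction over the three composition rules: their side conditions and the trace fragment they append depend only on the current trace and the leading statement, and the way each rule rewrites the continuation commutes with appending ``$;\hat s$''. For \textsc{Progress} this is precisely the local rule $\valP{}{s_1;\hat s} = \concatTr{\tau'}{\cont{s_2;\hat s}}$ inherited from $\valP{}{s_1} = \concatTr{\tau'}{\cont{s_2}}$ together with $\zero;\hat s \rightsquigarrow \hat s$; for \textsc{Call} a procedure body is prepended to the continuation and for \textsc{Return} the continuation is left untouched, so in both cases ``$;\hat s$'' is simply carried along (up to associativity of ``$;$''). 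An induction on computation length then lifts $\tau, \cont{r} \overset{*}{\to} \tau \chopSem \tau', \cont{\zero}$ to $\tau, \cont{r;s} \overset{*}{\to} \tau \chopSem \tau', \cont{s}$ in the same number of steps; applied to divergent computations, the same lifting shows that if $r;s$ terminates then so does $r$, so $\tau' = \eval{r}{\tau}$ is well defined.

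Second I would prove the prefix-invariance lemma: running $s$ from $\tau_0, \cont{s}$ and from $\kappa \chopSem \tau_0, \cont{s}$ (for any $\kappa$ such that $\kappa \chopSem \tau_0$ is an adequate trace) produces the same sequence of steps modulo the common prefix $\kappa$; in particular $\tau_0, \cont{s} \overset{*}{\to} \tau_0 \chopSem \rho, \cont{\zero}$ holds iff $\kappa \chopSem \tau_0, \cont{s} \overset{*}{\to} \kappa \chopSem \tau_0 \chopSem \rho, \cont{\zero}$. The only composition rule whose behaviour is sensitive to the trace beyond its final state is \textsc{Return}, through $currCtx$; every other rule inspects only $\last(\cdot)$ and the fixed lookup table. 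The crucial observation is that $s$ ranges over the statement grammar, which contains no \code{return}; hence every \code{return} executed during a run of $s$ originates from a procedure body inlined by an earlier \textsc{Call} step, which also emitted the matching \pushEv{} into the trace segment freshly generated since the run began. Consequently, at each \textsc{Return} step $currCtx$ resolves within that fresh segment and never consults $\kappa \chopSem \tau_0$. Instantiating with $\tau_0 = \tau'$, $\kappa = \tau$, $\rho = \eval{s}{\tau'}$ — and noting $\tau \chopSem \tau'$ is adequate since $\tau$ is adequate (Lemma~\ref{lemma:n-adequacy}) and $\tau' = \eval{r}{\tau}$ has balanced call structure, $r$ being a complete program — gives $\tau \chopSem \tau', \cont{s} \overset{*}{\to} \tau \chopSem \tau' \chopSem \eval{s}{\tau'}, \cont{\zero}$.

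Finally, concatenating the two phases, $\tau, \cont{r;s} \overset{*}{\to} \tau \chopSem \tau', \cont{s} \overset{*}{\to} \tau \chopSem (\tau' \chopSem \eval{s}{\tau'}), \cont{\zero}$. As the composition rules are deterministic and exhaustive this is \emph{the} maximal computation from $\tau, \cont{r;s}$, and it is finite, so Def.~\ref{def:glob-sem} gives $\eval{r;s}{\tau} = \tau' \chopSem \eval{s}{\tau'}$. I expect the prefix-invariance lemma to be the only genuinely delicate point: one must argue explicitly that $currCtx$ in the \textsc{Return} rule cannot look past the freshly generated trace segment, which is exactly where the syntactic fact that \code{return} occurs only inside inlined procedure bodies (each paired with a preceding \pushEv) is needed; everything else is bookkeeping on top of the already-noted determinism of local evaluation and composition.
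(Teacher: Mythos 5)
Your proposal is sound, and it is in fact more of a proof than the paper provides: Proposition~\ref{prop:glob-sem-comp} is stated in the appendix without proof, the main text only remarking that it follows ``by straightforward induction'' from the program semantics and the local rule for sequential composition. Your two-lemma factorisation makes that induction explicit: the continuation-lifting lemma is exactly the intended induction on composition steps (the \textsc{Progress} case being the local rule for ``;'' with $\zero;\hat s\rightsquigarrow\hat s$, while \textsc{Call} and \textsc{Return} merely carry the tail along), and the prefix-invariance lemma isolates the one genuinely non-trivial point, namely that \textsc{Return} is the only rule that inspects the trace beyond $\last(\cdot)$, via $currCtx$, and that during the run of $s$ every executed \code{return} stems from a body inlined by a \textsc{Call} whose \pushEv lies in the freshly generated segment, so $currCtx$ never reaches into the prefix. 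This buys a genuinely checkable argument where the paper offers only an assertion. Three small tightenings: the $currCtx$ claim needs its own little invariant (at each \textsc{Return} step the push/pop events emitted since the matching \pushEv are balanced); the appeal to adequacy of $\tau\chopSem\tau'$ is neither available for arbitrary $\tau$ nor needed---what matters is only that $\tau'$ and $\tau\chopSem\tau'$ share their last state together with the (asserted) determinism of the fresh-name choices, which depend only on that state; and the ``;''-associativity convention in the lifting lemma should be fixed once, e.g.\ by stating the lemma modulo flattening of continuations.
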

 
\begin{proposition}
  \label{prop:glob-sem-comp-update}
  $\eval{\update s}{\tau} = \eval{\update}{\tau} \,\chopSem\, \eval{s}{\eval{\update}{\tau}}$ and  $\eval{\update \update'}{\tau} = \eval{\update}{\tau} \,\chopSem\, \eval{\update'}{\eval{\update}{\tau}}$,
  and
  $\eval{\upP{v := m(e)}}{\tau} = \eval{\upP{v := m(e)}}{last(\tau)}$.
\end{proposition}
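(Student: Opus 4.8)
The plan is to follow closely the proof of Proposition~\ref{prop:glob-sem-comp}. The decisive structural observation is that the local rule $\valP{}{u\,\update\, s}=\concatTr{\tau}{\contF{}{\update\, s}}$ (applicable whenever $\valP{}{u}=\concatTr{\tau}{\contF{}{\zero}}$) plays exactly the role that the sequential-composition rule plays there: one composition step ``peels off'' the small-step trace of the leading update $u$ and leaves $\update\, s$ as continuation. First I would argue that this composition step is always an application of \eqref{eq:ruleprogress-rule} (\ruleName{Progress}) and never of \ruleName{Call} or \ruleName{Return}: the trace produced by evaluating a single update never has a dangling $\callEv$ or $\retEv$ as its last event, since an elementary update $\upP{v:=e}$ or $\upP{\code{res}_i:=e}$ produces a trace ending in a state, the call update $\upP{v:=m(e)}$ produces the trace $\eval{v=m(e)}{\singleton{\sigma}}$ (which ends in a state), and the event updates $\upP{\startEv(m,e,i)}$ and $\upP{\finishEv(m,e,i)}$ produce traces whose last event is a $\pushEv$ respectively $\popEv$. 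Hence the side conditions of \ruleName{Progress} are met and no context switch is triggered.

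Next I would record a \emph{prefix-independence} fact, namely $\eval{\update\, s}{\tau}=\eval{\update\, s}{\singleton{\last(\tau)}}$ and likewise $\eval{\update}{\tau}=\eval{\update}{\singleton{\last(\tau)}}$, proved by a trivial induction on the number of composition steps, since every composition rule inspects only $\last$ of the current trace and appends on its right. The third equality of the proposition is then immediate: by the local rule for $\upP{v:=m(e)}$ followed by one \ruleName{Progress} step, $\eval{\upP{v:=m(e)}}{\tau}=\eval{v=m(e)}{\singleton{\last(\tau)}}$, which depends on $\tau$ only through $\last(\tau)$.

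For the first two equalities I would induct on the length $n$ of the leading update sequence, proving both simultaneously over pairs consisting of a leading update sequence and a ``remaining program'' that is either a statement or a further sequence of updates; the claim about $\eval{\update\,\update'}{\tau}$ is simply the second kind of instance. For $n=0$ we have $\update=\epsilon$, so $\epsilon\, s=s$, $\eval{\epsilon}{\tau}=\singleton{\last(\tau)}$ (zero steps), and $\singleton{\last(\tau)}\chopSem\eval{s}{\singleton{\last(\tau)}}=\eval{s}{\singleton{\last(\tau)}}=\eval{s}{\tau}$ by prefix-independence. For the step, write $\update=u\,\update''$; the \ruleName{Progress} step above gives $\eval{u\,\update''\, s}{\tau}=\eval{u}{\tau}\chopSem\eval{\update''\, s}{\eval{u}{\tau}}$, and I would apply the induction hypothesis to $\eval{\update''\, s}{\eval{u}{\tau}}$ and to $\eval{u\,\update''}{\tau}$, then close the argument using associativity of $\chopSem$ on well-defined traces together with prefix-independence, the latter to rewrite $\eval{s}{\eval{\update''}{\eval{u}{\tau}}}$ as $\eval{s}{\eval{u\,\update''}{\tau}}$ (the two prefixes ending in the same state).

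The routine obligations are that all the iterated chops occurring in the argument are defined --- which holds because consecutive sub-traces share the matching endpoint state by construction and the resulting traces are adequate by Theorem~\ref{theorem:adequacy-semantics} --- and that $\chopSem$ is associative on such traces. The one point I expect to require genuine care, and hence the main obstacle, is the call-update case $\upP{v:=m(e)}$: there the ``small step'' of the leading update is not a two-state fragment but an entire sub-trace obtained by a recursive appeal to the program semantics of $v=m(e)$, so one must verify that it is consumed in a single \ruleName{Progress} step and composes correctly with the continuation. This is exactly where the assumption that procedure calls have no side effects, and the fact that $\eval{v=m(e)}{\singleton{\sigma}}$ depends only on $\sigma$, are used.
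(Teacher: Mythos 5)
The paper never actually writes out a proof of this proposition: it is stated bare in the appendix, and the main text only remarks that the companion Proposition~\ref{prop:glob-sem-comp} is ``easy to prove by straightforward induction''. Your induction on the length of the update prefix, together with the two key observations --- that processing a leading update always goes through rule~(\ref{eq:ruleprogress-rule}) because the trace emitted by a single update never ends in a dangling $\callEv$ or $\retEv$ (the $\startEv$/$\finishEv$ updates emit the matching $\pushEv$/$\popEv$ themselves, and $\upP{v:=m(e)}$ ends after the $\popEv$ and result assignment), and that the call update delegates to the program semantics of $v=m(e)$ started in $\singleton{\last(\tau)}$ --- is exactly the straightforward argument the paper has in mind, so in approach you match it.

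One justification, however, is overstated and is load-bearing where you use it: it is not true that ``every composition rule inspects only $\last$ of the current trace''. Rule~(\ref{eq:ruleret-rule}) consults $currCtx(\tau)$, which depends on the whole event history, and the side conditions of~(\ref{eq:ruleprogress-rule}) and~(\ref{eq:rulecall-rule}) inspect the last \emph{event}, not just the last state. For pure update prefixes this is harmless, exactly as you argue, since no Call/Return composition rule ever fires there. But the prefix-independence you invoke for the trailing statement $s$ --- in the base case, and when you rewrite $\eval{s}{\eval{\update''}{\eval{u}{\tau}}}$ as $\eval{s}{\eval{u\,\update''}{\tau}}$ --- can fail when $s$ contains a return whose context-establishing $\startEv$ lies in the portion of the prefix that gets dropped (e.g.\ $u=\upP{\startEv(m,e,i)}$, $\update''$ event-free, $s$ a return): the Return rule then emits a $\popEv$ for a different context and assigns a different $\code{res}$ variable, so the two traces genuinely differ. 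Strictly speaking the proposition itself needs the same implicit well-formedness assumption (in the soundness proofs it is only applied where $\eval{\update}{\tau}$ retains every $\pushEv$ relevant to returns in $s$), so to make your induction go through you should either impose that restriction explicitly or strengthen prefix-independence to: evaluation is unchanged under replacing the prefix by any trace with the same final state \emph{and} the same unmatched push events. With that caveat added, your argument is sound and is the intended one.
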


\begin{theorem}[Soundness]
  The presented sequent calculus is sound.
\end{theorem}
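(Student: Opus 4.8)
The plan is to prove Theorem (Soundness) by reducing it to the local soundness of each individual rule, exactly as indicated by the brief proof in Section~\ref{sec:calculus}. First I would establish the obvious reduction: a sequent calculus proof is a finite tree whose leaves are axioms (valid by inspection) and whose internal nodes are instances of the rules. By a straightforward induction on the height of the proof tree, if every rule is \emph{sound} in the sense of Definition~(Soundness) --- validity of all premises entails validity of the conclusion --- then the root sequent is valid. So the entire content of the theorem is the case analysis over the rules in Figures~\ref{fig:straight-line-programs} and the displayed rules \ruleName{Unfold}, \ruleName{ProcedureContract}, and \ruleName{TrAbs} (plus the standard first-order Gentzen rules, whose soundness is classical and can be cited).

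For the structural program rules (\ruleName{Assign}, \ruleName{Scope}, \ruleName{Cond}, \ruleName{Return}, \ruleName{VarDecl}) the argument is uniform: fix a state $\sigma$ making all premises true; one must show $\sigma \models \judge{\update\,s}{\Phi}$ for the conclusion. By Definition~\ref{def:sequent-semantics} this is trivial when $\eval{\update\,s}{\singleton{\sigma}}$ is undefined, so assume it is defined. Then I would use Proposition~\ref{prop:glob-sem-comp} and Proposition~\ref{prop:glob-sem-comp-update} together with the local evaluation rules of Figs.~\ref{fig:local} and~\ref{fig:stmt-update-eval} to rewrite $\eval{\update\,s}{\singleton{\sigma}}$ into the form appearing in the premise (e.g.\ for \ruleName{Assign}, $\eval{\update\upP{v:=e}\,s}{\singleton{\sigma}} = \eval{\update\,v=e;s}{\singleton{\sigma}}$ since $\eval{\upP{v:=e}}{\tau} = \eval{v=e}{\tau}$; for \ruleName{Cond}, split on the boolean value of $\update(e)$ in $\sigma$, which is recorded as a first-order assertion in the premise's antecedent; for \ruleName{Return}, use $currCtx(\update)$ and the event-update semantics to match the emitted $\retEv/\popEv$ pair against $\evalPhiNoArg{\finishEv(m,e,i)}$; for \ruleName{VarDecl}, use $\alpha$-renaming freshness). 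In each case the premise then gives membership in $\evalPhiNoArg{\Phi}$ directly. The rule \ruleName{Prestate} additionally needs the semantics clause for $\stateFml{Q}\chop\Phi$ and the decomposition of the trace at its first state; \ruleName{Unfold} is immediate from the fixed-point semantics clause in Fig.~\ref{fig:semantics-formulas}, since $\evalPhiNoArg{(\muFml{X(\overline y)}{\Phi})(\overline t)} = \evalPhiNoArg{\Phi[(\muFml{X(\overline y)}{\Phi})/X, \overline t/\overline y]}$ by the standard unrolling of a least fixed point.

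The genuinely substantive cases, and where I expect the main obstacle, are \ruleName{ProcedureContract} and \ruleName{TrAbs}. For \ruleName{ProcedureContract} one must justify a fixed-point \emph{induction}: assuming the premise, that $\Phi_m(n',i')$ is respected by the inlined body under the hypothesis $\mathbf{C}_m$, derive that $\mathbf{C}_m$ itself is valid. The argument is the classical partial-correctness induction on recursion depth (as in \cite{Hoare71,Oheimb99}): one shows by induction on the length of the terminating computation $\eval{m(n)}{\singleton{\sigma}}$ that its trace lies in $\evalPhiNoArg{\Phi_m(n,i)\chop\stateFml{\code{res}_i \doteq f_m(n)}}$, using the premise at the top level and the induction hypothesis at each recursive self-call; the subtle point is that the premise only assumes $\mathbf{C}_m$ at those calls, which is exactly what the induction hypothesis supplies, so the argument is well-founded on computation length rather than circular. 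For \ruleName{TrAbs} one must show the three premises suffice to decompose the trace of $\mathcal{U}_1\upl v:=m(e)\upr\mathcal{U}_2$ along $\Phi_1\chop\Phi_m(e,k)\chop\Phi_2$: by Proposition~\ref{prop:glob-sem-comp-update} this trace factors as $\eval{\mathcal{U}_1}{\singleton{\sigma}} \chopSem \eval{\upP{v:=m(e)}}{\cdot} \chopSem \eval{\mathcal{U}_2}{\cdot}$; the first premise places the first factor in $\evalPhiNoArg{\Phi_1}$, the second premise together with $\mathbf{C}_m$ places the middle factor in $\evalPhiNoArg{\Phi_m(e,k)}$ and fixes $v$ to $f_m(\mathcal{U}_1(e))$ (using the no-side-effects assumption so the call only writes $v$), and the third premise handles the tail. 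Here the delicate bookkeeping is the treatment of the fresh call-identifier $k$ and the adequacy of the resulting trace (Definition~\ref{def:trace-adequacy}, Theorem~\ref{theorem:adequacy-semantics}), ensuring the chop compositions are well-defined and that the contract instance $\Phi_m(e,k)$ matches the actual events emitted. Once all rules are dispatched, the theorem follows by the induction on proof height announced at the outset. Full details for each rule are deferred to Appendix~\ref{app:proofs}.
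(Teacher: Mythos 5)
Your proposal takes essentially the same route as the paper: soundness is reduced to local soundness of each rule, the straight-line rules are discharged via Propositions~\ref{prop:glob-sem-comp} and~\ref{prop:glob-sem-comp-update} together with the local evaluation semantics, \ruleName{Unfold} via the least-fixed-point unfolding equivalence, \ruleName{TrAbs} by the same three-way decomposition of the trace using $\mathbf{C}_m$, the precondition premise, and the absence of side effects, and \ruleName{ProcedureContract} by the depth-indexed induction in the style of Oheimb, which the paper itself only sketches. No substantive divergence or gap beyond the level of detail the paper also leaves open.
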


\begin{proof}
The result is a direct consequence of the (local) soundness of each rule, 
which we show here for selected rules. 
Recall that a rule is sound if its conclusion is a valid sequent
whenever all its premises are.
For some rules we also show \emph{reversibility} (also called
\emph{backward soundness}), which means that whenever the conclusion
of the rule is a valid sequent, the rule can be applied backwards in a
way so that all premises are valid.


\paragraph{Rule $\mathsf{Assign}$.}
We shall prove soundness and reversibility of the rule.
  Let $\sigma$ be a state. We have, with $\tau=\eval{\update}{\singleton{\sigma}}$:
  %
  $$ \begin{array}{lll}
     & 
  \sigma \models \judge{\update\upP{v:=e}s}{\Phi} & \\
  \Leftrightarrow~ & 
  \eval{\update\upP{v:=e}s}{\singleton{\sigma}} \in \evalPhiNoArg{\Phi} &
  \mbox{\{Def.~\ref{def:sequent-semantics}\}} \\
  \Leftrightarrow & 
  \tau \,\chopSem\, 
  \eval{\upP{v:=e}}{\tau} \,\chopSem\, 
  \eval{s}{\eval{\upP{v:=e}}{\tau}} 
  \in \evalPhiNoArg{\Phi} &
  \mbox{\{Prop.~\ref{prop:glob-sem-comp-update}\}} \\
  \Leftrightarrow & 
  \tau \,\chopSem\, 
  \eval{v=e}{\tau} \,\chopSem\, 
  \eval{s}{\eval{v=e}{\tau}} 
  \in \evalPhiNoArg{\Phi} ~~~~~~~~~~~~~&
  \mbox{\{Def. $\valP{}{\upP{v:=e}}$\}} \\
  \Leftrightarrow & 
  \eval{\update v=e;s}{\singleton{\sigma}} \in \evalPhiNoArg{\Phi} &
  \mbox{\{Prop.~\ref{prop:glob-sem-comp},\ref{prop:glob-sem-comp-update}\}} \\
  \Leftrightarrow & 
  \sigma \models \judge{\update v=e;s}{\Phi} &
  \mbox{\{Def.~\ref{def:sequent-semantics}\}} \\
  \end{array} $$
  We therefore have that the sequent 
  $\sequent{}{\upl v:= e\upr\judge{s}{\Phi}}$
  is valid if and only if the sequent
  $\sequent{}{\judge{v=e;s}{\Phi}}$
  is valid.
  \qed
  
\paragraph{Rule $\mathsf{Cond}$.}
We shall prove soundness and reversibility of the rule.
Let $\sigma$ be a state. We have, with $\tau=\eval{\update}{\singleton{\sigma}}$:
  $$ \begin{array}{lll}
     & 
  ~~~
  \sigma \models \update (e)  \>\Rightarrow\>
  \sigma \models \judge{\update s; s'}{\Phi} & \\
     & 
  \wedge\
  \sigma \models \update (!e)  \>\Rightarrow\>
  \sigma \models \judge{\update s'}{\Phi} & \\
  \Leftrightarrow~ & 
  ~~~
  \sigma \models \update (e)  \>\Rightarrow\>
  \eval{\update\, s; s'}{\singleton{\sigma}} \in \evalPhiNoArg{\Phi} & \\
     & 
  \wedge\
  \sigma \models \update (!e)  \>\Rightarrow\>
  \eval{\update\, s'}{\singleton{\sigma}} \in \evalPhiNoArg{\Phi} &
  \mbox{\{Def.~\ref{def:sequent-semantics}\}} \\
  \Leftrightarrow~ & 
  ~~~
  \sigma \models \update (e)  \>\Rightarrow\>
  \tau \,\chopSem\,
  \eval{s;s'}{\tau}  \in \evalPhiNoArg{\Phi} ~~~ & \\
     & 
  \wedge\
  \sigma \models \update (!e)  \>\Rightarrow\>
  \tau \,\chopSem\,
  \eval{s'}{\tau} \in \evalPhiNoArg{\Phi} &
  \mbox{\{Prop.~\ref{prop:glob-sem-comp-update}\}} \\
  \Leftrightarrow~ & 
  ~~~
  \valBP{\mathit{last}(\tau)}{}{e} = \trueSem  \>\Rightarrow\>
  \tau \,\chopSem\,
  \eval{s;s'}{\tau}  \in \evalPhiNoArg{\Phi} ~~~ & \\
     & 
  \wedge\
  \valBP{\mathit{last}(\tau)}{}{e} = \falseSem  \>\Rightarrow\>
  \tau \,\chopSem\,
  \eval{s'}{\tau} \in \evalPhiNoArg{\Phi} &
  \mbox{\{Def.~$\sigma \models \update (e)$\}} \\
  \Leftrightarrow & 
  \tau \,\chopSem\,
  \eval{\ifStmt{e}{s};s'}{\tau}
  \in \evalPhiNoArg{\Phi} &
  \mbox{\{Fig.~\ref{fig:local}, Def.~\ref{def:glob-sem}\}} \\
  \Leftrightarrow & 
  \eval{\update \ifStmt{e}{s}; s'}{\singleton{\sigma}} 
  \in \evalPhiNoArg{\Phi} &
  \mbox{\{Prop.~\ref{prop:glob-sem-comp-update}\}} \\
  \Leftrightarrow & 
  \sigma \models \update\judge{\ifStmt{e}{s}; s'}{\Phi} &
  \mbox{\{Def.~\ref{def:sequent-semantics}\}} \\
  \end{array} $$
  \paragraph{Explanation:}
     Given Fig.~\ref{fig:local}, Def.~\ref{def:glob-sem} we have
     \begin{align}
        &\semSeq
        {\semPair{\tau}{\ifStmt{e}{s};s'}}
        {\semPair{\tau}{s;s'}}{*}, \  \text{if} \ \valB{last(\tau)}{e}=\trueSem, \ \text{and}\notag\\
        &\semSeq
        {\semPair{\tau}{\ifStmt{e}{s};s'}}
        {\semPair{\tau}{s'}}{*}, \  \text{if} \ \valB{last(\tau)}{e}=\falseSem \notag
     \end{align}
     Therefore 
     \begin{align}
        &\eval{\ifStmt{e}{s};s'}{\tau}=\eval{s;s'}{\tau} \ \text{if} \ \valB{last(\tau)}{e}=\trueSem, \ \text{and}\notag\\
        &\eval{\ifStmt{e}{s};s'}{\tau}=\eval{s'}{\tau} \  \text{if} \ \valB{last(\tau)}{e}=\falseSem \notag
     \end{align}
  %
  %
  We therefore have that the sequent 
  $\sequent{}{\update\judge{\mbox{\lstinline{if}}\,(e)\,s; s'}{\Phi}}$
  is valid if and only if the sequents
  $\sequent{\update(e)}{\judge{\update s; s'}{\Phi}}$
  and
  $\sequent{\update(!e)}{\judge{\update s'}{\Phi}}$
  are valid.

\paragraph{Rule $\mathsf{Unfold}$.}
We shall prove soundness and reversibility of the rule.
By Tarski's fixed-point theorem for complete lattices~\cite{TarskiPJM55}, 
the semantics 
$\evalPhiS{(\muFml{X (\overline{y})}{\Phi}) (\overline{t})}$ of 
a fixed-point predicate $\muFml{X (\overline{y})}{\Phi}$ 
is indeed a fixed point of the trace predicate transformer
$\lambda F. \lambda \overline{d}. \evalPhi{\Phi}{}
   {\beta [\overline{y} \mapsto \overline{d}]}{\rho [X \mapsto F]}$.
We therefore have the following \emph{fixed-point unfolding} equivalence:
$$ (\muFml{X (\overline{y})}{\Phi}) (\overline{t})
   \>\equiv\>
   \Phi \!\left[(\muFml{X (\overline{y})}{\Phi}) / X,
                \overline{t} / \overline{y}\right] $$
where $\Phi_1 \equiv \Phi_2$ is defined to hold whenever 
$\evalPhiS{\Phi_1} = \evalPhiS{\Phi_2}$
for all $\beta$ and~$\rho$.
The soundness and reversibility of the rule are a direct
consequence of this equivalence.

\paragraph{Rule $\mathsf{ProcedureContract}$.}
%
Because the details are somewhat technical, soundness is only be
sketched here.
We follow the approach taken in~\cite{Oheimb99} to prove
the soundness of a similar rule, but in the context of Hoare logic.
The essence of the approach is to find a suitable notion of validity
of sequents that allows to capture an inductive argument on the
recursive depth of procedure calls. 
In~\cite{Oheimb99}, this is achieved by augmenting the notion of
sequent validity with an explicit parameter~$n$ of that depth,
in turn relying on a modified version of the operational semantics 
that is also parameterised on~$n$ as a bound on the maximal
recursion depth when going from an initial state to a final one.
Here, we follow the same approach, and apply it to traces.

\paragraph{Rule $\mathsf{TrAbs}$.}
We prove soundness of the rule.
%
%
$$
\seqRule{TrAbs}{
\sequent{}{\judge{\mathcal{U}_1}{\Phi_1}}\quad
\sequent{}{\mathcal{U}_1(pre_m(e)})\quad
\seq{\mathbf{C}_m}{\judge{\upP{v:=f_m(\mathcal{U}_1(e))}\mathcal{U}_2}{\Phi_2}}
}{
\sequent{\mathbf{C}_m}{\judge{\mathcal{U}_1\upl v:=m(e)\upr\,\mathcal{U}_2}{\Phi_1\chop\,\Phi_m(e,k)\chop\,\Phi_2}}
}
$$
where $f_m(\cdot)$ is the function computed by $m$.

Assuming the premisses (I)--(III) (from left to right) are valid, we have to show that the conclusion is valid. This means that for all states $\sigma$
$$
\sigma \models (\Gamma \wedge \mathbf{C}_m)\rightarrow \judge{\mathcal{U}_1\upl v:=m(e)\upr\,\mathcal{U}_2}{\Phi_1\chop\,\Phi_m(e,k)\chop\,\Phi_2}
$$
holds. We consider only the non-trivial case where $\sigma\models \Gamma \wedge \mathbf{C}_m$ holds. This means we have to prove that 
$$\eval{\mathcal{U}_1\upl v:=m(e)\upr\,\mathcal{U}_2}{\singleton{\sigma}}\in\evalPhiNoArg{\Phi_1\chop\,\Phi_m(e,k)\chop\,\Phi_2}$$
We can decompose the left side as follows:
$$
\eval{\mathcal{U}_1\upl v:=m(e)\upr\,\mathcal{U}_2}{\singleton{\sigma}}
= \concatTr{\underbrace{\concatTr{\tau}{\eval{v=m(e)}{\tau}}}_{\tau'}}{\tau''},~\text{with}~\eval{\mathcal{U}_1}{\singleton{\sigma}}=\tau~\text{and}~\eval{\mathcal{U}_2}{\tau'}=\tau''
$$
Validity of premise (I) ensures already that
$\tau\in\eval{\Phi_1}{\singleton{\sigma}}$.

\noindent For the middle part, we observe that
\begin{align*}
\eval{v=m(e)}{\tau} & =\overbrace{\callEvP{\last(\tau)}{m,e,k}\chopSem\pushEvP{\last(\tau)}{(m,i)}}^{\bar{\tau}}\chopSem\eval{mb[p\mapsto e];v=\code{res}_{k}}{\bar{\tau}}\\
& =\eval{m(e);}{\tau}\chopSem\eval{v=\code{res}_{k};}{\hat{\tau}}	
\end{align*}
with $\hat{\tau}=\eval{m(e);}{\tau}$.

\noindent By assumption $\sigma\models \mathbf{C}_m$, i.e., $$\sigma\models\forall n,i.(pre_m(n)\rightarrow\judge{m(n)}{\Phi_m(n,i)\chop \stateFml{\code{res}_i\doteq f_m(n)}})$$ and hence, 
$$\sigma\models (pre_m(e_1)\rightarrow\judge{m(e_1)}{\Phi_m(e_1,k)\chop \stateFml{\code{res}_k\doteq f_m(e_1)}})$$
with $\valB{\sigma}{e_1}=\valB{\sigma}{\mathcal{U}_1 e}$ and $e_1$ fresh rigid constant symbol and $i$ instantiated with $k$.
Validity of premise (II) asserts that $$\sigma\models\mathcal{U}_{1}pre_{m}(e) \Leftrightarrow \sigma\models pre_{m}(\mathcal{U}_{1}e)\Leftrightarrow \sigma\models pre_{m}(e_1)\enspace.$$

Consequently (modus ponens), 
\begin{align*}
 & \sigma \models \judge{m(e_1)}{\Phi_m(e_1,k)\chop \stateFml{\code{res}_k\doteq f_m(e_1)}}\\
\Leftrightarrow & \eval{m(e_1)}{\singleton{\sigma}}\in \eval{\Phi_m(e_1,k)\chop \stateFml{\code{res}_k\doteq f_m(e_1)}}{\singleton{\sigma}}\\
\Rightarrow & \eval{m(e_1)}{\singleton{\sigma}}\in \eval{\Phi_m(e_1,k)}{\singleton{\sigma}}
\end{align*}
 
Because there are no side effects from procedure calls on the state,
we have that if for any two states $\sigma,\,\sigma'$
$$
\valB{\sigma}{e}=\valB{\sigma'}{e}\quad\text{and}\quad\valB{\sigma}{k}=\valB{\sigma'}{k}
$$
then
$$
\eval{m(e)}{\singleton{\sigma}}=\eval{m(e)}{\singleton{\sigma'}}\quad\text{and}\quad\eval{\Phi_m(e,k)}{\singleton{\sigma}}=\eval{\Phi_m(e,k)}{\singleton{\sigma'}}\enspace.
$$

Thus we can deduce: $$\eval{m(e_1)}{\singleton{\sigma}}=\eval{m(e)}{\singleton{\last(\tau)}}=\eval{m(e)}{\tau}=\hat{\tau}
\quad\text{, and further,}\quad\hat{\tau}\in\eval{\Phi_m(e,k)}{\tau}\enspace.$$

In summary, we have now that
$\concatTr{\tau}{\hat{\tau}}\in \eval{\Phi_1\chop\Phi_m(e,k)}{\singleton{\sigma}}$
We have not yet considered the whole trace of the procedure call update. Remember:
$$\eval{v=m(e)}{\tau}=\eval{m(e);v=\code{res}_{k}}{\tau}=\eval{m(e);}{\tau}\chopSem\eval{v=\code{res}_{k};}{\hat{\tau}}$$

It remains to show that $$\eval{v=\code{res}_{k};}{\hat{\tau}}\chopSem\eval{\mathcal{U}_2}{\tau'}\in\eval{\Phi_2}{\hat{\tau}}$$

This is a direct consequent of premise (III), the only critical point
being the equality of the value of variable $v$. This follows from the
fact that in the conclusion $v$ has the value computed by the
procedure when called with parameters $\mathcal{U}_1(e)$ which is the
same value to which $f_m(\mathcal{U}_1(e))$ evaluates by definition of
$f_m$.


\qed


\end{proof}

\section{Extended Examples}
\label{app:extended-examples}

\begin{example}[Extended Version of Example \ref{ex:calculus:straight-line}]
  Here is the symbolic execution of the straight-line program ``s =
  \code{if (k!=0) \{r=k-1; r=r+1;\} return r}'' with premise
  $\Gamma:=\code{k}>0$.  For well-formedness, since the program ends
  with a return statement it has to be preceded by a leading update
  $\upP{\startEv(\code{m},\code{k},\code{i}')}\update$. We assume
  $\update$ to be empty.
\begin{prooftree}
  \AxiomC{\LeftLabel{}\text{(symbolic execution finished)}} 
  \UnaryInfC{\RuleL{}$\seq{\code{k}>0}{\judge{\update_1\upP{\code{r}:=\code{k}-1}\upP{\code{r}:=\code{r}+1} \upP{\finishEv(\code{m},\code{r},\code{i}')}\upP{\code{res}_{\code{i}'}:=\code{r}}}{\Phi}}$}
  \UnaryInfC{\RuleL{Assign}$\seq{\code{k}>0}{\judge{\update_1\upP{\code{r}:=\code{k-1}}\upP{\code{r}:=\code{r+1}} \upP{\finishEv(\code{m},\code{r},\code{i}')}\code{res}_{\code{i}'}\code{=r}}{\Phi}}$}
  \UnaryInfC{\RuleL{Return}$\seq{\code{k}>0}{\judge{\update_1\upP{\code{r}:=\code{k}-1}\upP{\code{r}:=\code{r+1}}\mbox{\lstinline|return r|}}{\Phi}}$} 
  \UnaryInfC{\RuleL{Assign}$\seq{\code{k}>0}{\judge{\update_1\upP{\code{r}:=\code{k-1}}\mbox{\lstinline|r=r+1; return r|}}{\Phi}}$} 
  \UnaryInfC{\RuleL{Assign}$\seq{\code{k}>0}{\judge{\update_1\mbox{\lstinline|r=k-1; r=r+1; return r|}}{\Phi}}$} 
  \UnaryInfC{\RuleL{Cond}$\seq{\code{k}>0}{\judge{\underbrace{\upP{\startEv(\code{m},\code{k},\code{i}')}}_{\update_1}\mbox{\lstinline|if (k!=0) \{r=k-1; r=r+1;\} return r|}}{\Phi}}$} 
\end{prooftree}
\end{example}

{
\newcommand{\smallCodeM}{\text{\scriptsize{\code{m}}}}
\begin{example}[Extended Version of Example \ref{ex:procedure-contract-rule}]
  Symbolic execution for procedure \code{m} from Example~\ref{lst:running-example} with contract
   $
\mathbf{C}_{\smallCodeM}=\forall
n,i.(\judge{\code{m}(n)}{\Phi_{\smallCodeM}(n,i)\chop \stateFml{\code{res}_i\doteq n}})$, with $\update=\upP{\startEv(\code{m},\code{n}',\code{i}')}$:
  \begin{prooftree}
    \AxiomC{(symbolic execution of procedure body starts)}
    \UnaryInfC{\RuleL{}$\seq{\code{n}'\geq 0,\,\mathbf{C}_{\smallCodeM}}{\judge{\update\upP{\code{k}':=\code{n}'}\upP{\code{r}':=0} \code{s}[\code{k}/\code{k}',\code{r}/\code{r}']}{\Phi_{\smallCodeM}(\code{n}',\code{i}')}}$}
    \UnaryInfC{\RuleL{VarDecl}$\seq{\code{n}'\geq 0,\,\mathbf{C}_{\smallCodeM}}{\judge{\update\upP{\code{k}':=\code{n}'} \{\code{r};\code{s}[\code{k}/\code{k}']\}}{\Phi_{\smallCodeM}(\code{n}',\code{i}')}}$}
    \UnaryInfC{\RuleL{Assign}$\seq{\code{n}'\geq 0,\,\mathbf{C}_{\smallCodeM}}{\judge{\update \code{k}'=\code{n}';\{\code{r;s}[\code{k}/\code{k}']\}}{\Phi_{\smallCodeM}(\code{n}',\code{i}')}}$}
    \UnaryInfC{$\seq{\code{n}'\geq 0,\,\mathbf{C}_{\smallCodeM}}{\judge{inline(\code{m},\code{n}',\code{i}')}{\Phi_{\smallCodeM}(\code{n}',\code{i}')}}$}
    \UnaryInfC{\RuleL{ProcedureContract}$\seq{}{\mathbf{C}_{\smallCodeM}}$}
  \end{prooftree}
  
\end{example}

}\begin{example}[Extended Version of Example
  \ref{ex:trace-abstraction}] We provide some of the needed update
  simplification rules:
$$\begin{array}{l}

\seqRule{elimUpdate_1}{
\seq{\Gamma}{\judge{\update}{\Phi},~\Delta}\qquad\seq{\Gamma}{\update\upP{x:=e}\varphi,~\Delta}
}{
\seq{\Gamma}{\judge{\update\upP{x:=e}}{\Phi\concat\stateFml{\varphi}},~\Delta}
}

\\[2em]

\seqRule{elimUpdate_2}{
\seq{\Gamma}{\judge{\update}{\Phi},~\Delta}\qquad
\seq{\Gamma}{(\update m)\doteq m' \wedge (\update e)\doteq e' \wedge (\update i)\doteq i',~\Delta}
}{
\seq{\Gamma}{\judge{\update\upP{\finishEv(m,e,i)}}{\Phi\chop\finishEv(m',e',i')},~\Delta}
}

\\[2em]

\seqRule{subsumeUpdates_1}{
\seq{\Gamma}{\judge{\update_1}{\Phi},~\Delta}
}{
\seq{\Gamma}{\judge{\update_1\update_2}{\Phi~\finiteNoM{m}},~\Delta}
}\qquad
    \begin{minipage}{.3\linewidth}
      if $\update_2$ does not contain any update involving $m$
    \end{minipage}

\end{array}$$

\end{example}



\end{document}